\documentclass[review]{elsarticle}

\usepackage{lineno,hyperref}
\usepackage{amsmath}
\usepackage{amssymb}
\usepackage{algorithmic}
\usepackage{algorithm}
\usepackage{caption}
\usepackage{mathtools}
\usepackage{changes}
\usepackage{multirow}
\usepackage{verbatim}
\usepackage{mathrsfs}
\usepackage{graphicx}
\usepackage{amsmath,amsthm,bm,mathrsfs}

\theoremstyle{plain}
\newtheorem{theorem}{Theorem}[section]

\newtheorem{corollary}[theorem]{Corollary}
\newtheorem{proposition}[theorem]{Proposition}

\theoremstyle{definition}

\theoremstyle{remark}
\newtheorem{remark}{Remark}

\modulolinenumbers[5]

\journal{ArXiv.org}









\bibliographystyle{elsarticle-num}

\begin{document}

\begin{frontmatter}

\title{Balanced Truncation of Linear Systems with Quadratic Outputs in Limited Time and Frequency Intervals}

\author[qs]{Qiu-Yan~Song}

\author[qs,uz]{Umair~Zulfiqar\corref{mycorrespondingauthor}}
\cortext[mycorrespondingauthor]{Corresponding author}
\ead{umairzulfiqar@shu.edu.cn}

\author[zx]{Zhi-Hua~Xiao}
\author[mud]{Mohammad~Monir~Uddin}

\author[vs]{Victor~Sreeram}

\address[qs]{School of Mechatronic Engineering and Automation, Shanghai University, Shanghai, 200444, China}
\address[uz]{School of Electronic Information and Electrical Engineering, Yangtze University, Jingzhou, Hubei, 434023, China}
\address[zx]{School of Information and Mathematics, Yangtze University, Jingzhou, Hubei, 434023, China}
\address[mud]{Department of Mathematics and Physics, North South University, Dhaka, 1229, Bangladesh}
\address[vs]{Department of Electrical, Electronic, and Computer Engineering, The University of Western Australia, Perth, 6009, Australia}

\begin{abstract}
Model order reduction involves constructing a reduced-order approximation of a high-order model while retaining its essential characteristics. This reduced-order model serves as a substitute for the original one in various applications such as simulation, analysis, and design. Often, there's a need to maintain high accuracy within a specific time or frequency interval, while errors beyond this limit can be tolerated. This paper addresses time-limited and frequency-limited model order reduction scenarios for linear systems with quadratic outputs, by generalizing the recently introduced structure-preserving balanced truncation algorithm \cite{benner2021gramians}. To that end, limited interval system Gramians are defined, and the corresponding generalized Lyapunov equations governing their computation are derived. Additionally, low-rank solutions for these equations are investigated. Next, balanced truncation algorithms are proposed for time-limited and frequency-limited scenarios, each utilizing its corresponding limited-interval system Gramians. The proposed algorithms ensure accurate results within specified time and frequency intervals while preserving the quadratic-output structure. Two benchmark numerical examples are presented to demonstrate the effectiveness of the algorithms, showcasing their ability to achieve superior accuracy within the desired time or frequency interval.
\end{abstract}

\begin{keyword}
balanced truncation\sep frequency-limited\sep Gramians\sep model order reduction\sep time-limited\sep quadratic output
\end{keyword}

\end{frontmatter}


\section{Introduction}
Mathematical models of both physical and artificial systems and processes are essential for conducting computer simulations, analyses, and related design procedures. With the advancement of chip manufacturing capabilities leading to a decrease in chip size, modern-day computers have witnessed a significant increase in processing and computing power. This enhancement allows for the inclusion of intricate details in the mathematical models of dynamical systems, ensuring high fidelity in computer simulations. However, the addition of more details results in very high-order models, posing computational challenges in simulation and analysis despite the considerable improvement in computer processing power and memory resources. Consequently, design procedures based on these high-order models become complex and sometimes impractical for actual implementation. Hence, there arises a need for a reduced-order approximation of the original high-order model with acceptable numerical error. Model order reduction (MOR) addresses this need by providing a procedure to obtain a reduced-order model (ROM) that accurately approximates the original high-order model while retaining its important properties and characteristics. The specific properties and characteristics to be preserved dictate the algorithmic approach taken by a MOR algorithm to construct a ROM. MOR serves as an effective solution to mitigate the computational costs associated with high-order dynamical models of various practical systems and processes. Refer to \cite{benner2011model,schilders2008model,antoulas2005approximation,obinata2012model} for an in-depth exploration of this topic.

Balanced truncation (BT) emerged in 1981 \cite{moore1981principal} and has since become one of the most widely applied MOR techniques. It selectively retains states associated with significant Hankel singular values while discarding those with minimal contribution to input-output energy transfer. One notable aspect of BT is the availability of an apriori error bound, as derived in \cite{enns1984model}. Additionally, BT ensures the stability of the original model. Initially developed for standard first-order linear time-invariant (LTI) systems, BT has undergone significant evolution, expanding its applicability to various classes of systems including descriptor systems \cite{mehrmann2005balanced,heinkenschloss2008balanced}, second-order systems \cite{chahlaoui2006second,reis2008balanced}, linear time varying systems \cite{sandberg2004balanced,lall2003error}, parametric systems \cite{benner2015survey,son2021balanced}, nonlinear systems \cite{lall2002subspace,lall1999empirical,kramer2022balanced}, and bilinear systems \cite{zhang2003gramians,duff2019balanced}, among others, forming a diverse family of algorithms. Furthermore, BT has been extended to preserve system properties such as positive-realness \cite{reis2010positive}, bounded realness \cite{opdenacker1988contraction}, passivity \cite{phillips2002guaranteed}, and special structures \cite{sarkar2023structure, borja2021extended}, to mention a few. For a comprehensive survey on the BT family of algorithms, refer to \cite{gugercin2004survey}.

BT typically provides an accurate approximation of the original model across the entire time horizon. However, practical system simulations often operate within limited time intervals, reflecting real-world operational constraints. For example, in interconnected power systems, low-frequency oscillations typically persist for only 15 seconds before being effectively dampened by power system stabilizers and damping controllers \cite{kundur2007power,sauer2017power}. Consequently, the initial 15 seconds play a crucial role in small-signal stability analysis. Similarly, in time-limited optimal control problems \cite{grimble1979solution}, the plant's behavior within the desired time interval is paramount. This necessity drives the concept of time-limited MOR, which prioritizes achieving maximum accuracy within specified time intervals rather than pursuing accuracy across the entire time horizon. To address this, BT was adapted to address the time-limited MOR problem, resulting in the development of a time-limited BT (TLBT) algorithm \cite{gawronski1990model}. Although TLBT does not preserve all BT’s features like stability or an apriori error bound, it effectively addresses the time-limited MOR scenario. Computational aspects of TLBT, along with efficient algorithms for handling large-scale systems, are discussed in \cite{kurschner2018balanced}. Additionally, TLBT has been extended to a broader class of systems, including descriptor systems \cite{haider2017model}, second-order systems \cite{benner2021frequency}, and bilinear systems \cite{shaker2014time}.

Much like in the time domain, BT typically provides an accurate approximation of the original model across the entire frequency spectrum. Many MOR problems inherently exhibit a frequency-limited nature, as certain frequency intervals hold greater significance. For instance, when constructing a ROM for a notch filter, minimizing the approximation error near the notch frequency becomes paramount \cite{jazlan2019frequency}. Similarly, to ensure closed-loop stability, the ROM of the plant must effectively capture the system’s behavior in the crossover frequency region \cite{wortelbore1994frequency}. In interconnected power systems, the presence of low-frequency oscillations is critical for small-signal stability studies. Hence, the ROM of interconnected power systems should accurately represent the behavior within the frequency range encompassing inter-area and inter-plant oscillations \cite{zulfiqar2019finite}. This necessity drives the concept of frequency-limited MOR, which prioritizes achieving maximum accuracy within specified frequency intervals rather than pursuing accuracy across the entire frequency spectrum. In \cite{gawronski1990model}, BT is extended to address the frequency-limited MOR problem, resulting in the development of a frequency-limited BT (FLBT) algorithm. FLBT, however, does not retain the stability preservation and apriori error bound properties of BT. Computational aspects of FLBT, along with efficient algorithms for handling large-scale systems, are discussed in \cite{benner2016frequency}. Additionally, FLBT has been extended to a more general class of systems, including descriptor systems \cite{imran2015model}, second-order systems \cite{benner2021frequency}, and bilinear systems \cite{shaker2013frequency}.

LTI systems with quadratic outputs (LTI-QO) constitute an important class of dynamical systems prevalent in various applications. These models emerge in mechanical systems, such as mass-spring-damper systems \cite{depken1974observability}, random vibration analysis \cite{haasdonk2013reduced,lutes2004random}, and electrical circuits with time-harmonic Maxwell’s equations \cite{hammerschmidt2015reduced,hess2016output}. Despite the similarity of state equations between LTI-QO systems and standard LTI systems, the output equation of LTI-QO systems is nonlinear, represented by a quadratic function of the states. BT has been extended to accommodate LTI-QO systems, with three main approaches outlined in the literature for generalizing BT. The first approach involves reformulating the LTI-QO system as a standard LTI system and then applying classical BT to derive a ROM \cite{van2010model}. The second approach entails transforming the LTI-QO system into quadratic-bilinear systems and then applying BT's generalization for bilinear systems to obtain a ROM \cite{pulch2019balanced}. However, both of these approaches are computationally demanding and fail to preserve the structure of LTI-QO systems. The third approach employs Hilbert space adjoint theory to define system Gramians for LTI-QO systems. Using these system Gramians, this approach generalizes the BT method to LTI-QO systems \cite{benner2021gramians}, ensuring preservation of the system's structure unlike the previous approaches \cite{van2010model,pulch2019balanced}. Furthermore, the system Gramians are shown to be solutions of generalized Lyapunov equations, which can be efficiently computed using low-rank approximations \cite{ahmad2010krylov,benner2014self}, rendering this approach computationally efficient.

This paper extends the BT algorithm recently introduced for LTI-QO systems in \cite{benner2021gramians} to address time-limited and frequency-limited MOR scenarios. To achieve this, we define time-limited and frequency-limited system Gramians and derive the generalized Lyapunov equations they satisfy. Additionally, we discuss low-rank solutions to these Lyapunov equations. We also derive Laguerre expansion-based low-rank factorizations for the time-limited and frequency-limited Gramians of LTI-QO systems. Subsequently, TLBT and FLBT algorithms for LTI-QO systems are proposed based on these time-limited and frequency-limited Gramians. The efficacy of the proposed algorithms is demonstrated through two benchmark numerical examples, illustrating their ability to ensure superior accuracy within the specified time and frequency intervals.
\section{Preliminaries}
Consider a linear dynamic system $H$ with quadratic outputs, described by the subsequent state and output equations
\begin{align}
 H:=
  \begin{cases}
   \dot{x}(t)=Ax(t)+Bu(t), & x(0)=0 \\
   y(t)=x(t)^TMx(t),       &
  \end{cases}\label{eq:1}
\end{align}wherein $A\in\mathbb{R}^{n\times n}$, $B\in\mathbb{R}^{n\times m}$, and $M\in\mathbb{R}^{n\times n}$. It should be noted that the state equation remains linear, akin to standard LTI systems. However, unlike standard LTI systems, the output equation assumes a nonlinear form, incorporating a quadratic relationship among the states. Throughout this paper, we assume that $A$ is Hurwitz. Additionally, we assume $M$ to be symmetric, a condition imposed without loss of generality, as one can always construct a symmetric matrix $M^{\prime}=(M+M^T)/2$ that satisfies the quadratic relation of the states $x(t)^TMx(t)=x(t)^TM^{\prime}x(t)$.

The main objective of MOR algorithm is to construct the projection matrices $V_r\in\mathbb{R}^{n\times r}$ and $W_r\in\mathbb{R}^{n\times r}$, satisfying $W_r^TV_r=I$ and $r\ll n$. The ROM is then obtained via these projection matrices as follows:
\begin{align}
 H_r:=
  \begin{cases}
   \dot{x_r}(t)=A_rx_r(t)+B_ru(t), & x_r(0)=0 \\
   y_r(t)=x_r(t)^TM_rx_r(t),       &
  \end{cases}\label{eq:2}
\end{align}wherein
$A_r=W_r^TAV_r\in\mathbb{R}^{r\times r}$, $B_r=W_r^TB\in\mathbb{R}^{r\times m}$, $M_r=V_r^TMV_r\in\mathbb{R}^{r\times r}$. The construction of the projection matrices aims to ensure that $H_r$ approximates $H$ according to specific criteria. For instance, in time-limited MOR, the goal is to reduce $y(t)-y_r(t)$ within a finite (typically short) time interval $[0,\tau]$ sec for any input $u(t)$. On the other hand, in frequency-limited MOR, the objective is to reduce $y(t)-y_r(t)$ if the frequency of the input signal $u(t)$ lies within a finite (typically short) frequency interval $[0,\omega]$ rad/sec. The desired properties of $H$ that need to be preserved in $H_r$ give rise to various approaches for constructing the projection matrices $V_r$ and $W_r$, thus leading to various MOR algorithms.

The controllability Gramian of realization (\ref{eq:1}) aligns with that of standard LTI systems due to the shared state equation. Let $P$ denote the controllability Gramian of (\ref{eq:1}), which can be expressed in integral form as follows:
\begin{align}
P = \int_{0}^{\infty} e^{At} B B^T e^{A^Tt} dt.
\end{align} $P$ can be computed by solving the following Lyapunov equation:
\begin{align}
AP+PA^T+BB^T=0.
\end{align}
The observability Gramian $Q$ for realization (\ref{eq:1}), in integral form \cite{benner2021gramians}, is represented as:
\begin{align}
Q&=\int_{0}^{\infty}e^{A^T\tau_1}M\Big(\int_{0}^{\infty}e^{A\tau_2}BB^Te^{A^T\tau_2}d\tau_2\Big)Me^{A\tau_1}d\tau_1\nonumber\\
&=\int_{0}^{\infty}e^{A^T\tau_1}MP Me^{A\tau_1}d\tau_1.\label{eq:siog}
\end{align}
$Q$ can be computed by solving the following Lyapunov equation:
\begin{align}
A^TQ+QA+MPM=0.\label{eq:og}
\end{align}
For the multiple output scenario, the output equation in (\ref{eq:1}) transforms into:
\begin{align}
y(t)=Cx(t)+\begin{bmatrix}x(t)^TM_1x(t)\\\vdots\\x(t)^TM_px(t)\end{bmatrix},\label{oe:m}
\end{align}wherein $C\in\mathbb{R}^{p\times n}$ and $M_i\in\mathbb{R}^{n\times n}$. Accordingly, the output equation for the multiple output scenario in (\ref{eq:2}) becomes:
\begin{align}
y_r(t)=C_rx_r(t)+\begin{bmatrix}x_r(t)^TM_{1,r}x_r(t)\\\vdots\\x_r(t)^TM_{p,r}x_r(t)\end{bmatrix},\nonumber
\end{align}wherein $C_r=CV_r\in\mathbb{R}^{p\times r}$ and $M_{i,r}=V_r^TM_iV_r\in\mathbb{R}^{r\times r}$.

The observability Gramian $Q$ in the multiple output scenario, as detailed in \cite{benner2021gramians}, is expressed in integral form as follows:
\begin{align}
Q&=\int_{0}^{\infty}e^{A^Tt}C^TCe^{At}dt\nonumber\\
&\hspace*{1cm}+\int_{0}^{\infty}e^{A^T\tau_1}\Bigg(\sum_{i=1}^{p}M_i\Big(\int_{0}^{\infty}e^{A\tau_2}BB^Te^{A^T\tau_2}d\tau_2\Big)M_i\Bigg)e^{A\tau_1}d\tau_1\nonumber\\
&=\int_{0}^{\infty}e^{A^Tt}C^TCe^{At}dt+\sum_{i=1}^{p}\Big(\int_{0}^{\infty}e^{A^T\tau_1}M_iP M_ie^{A\tau_1}d\tau_1\Big),\nonumber\\
&=Q_{0}+\sum_{i=1}^{p}Q_{i}.
\end{align}
$Q_{0}$, $Q_{i}$, and $Q$ can be obtained by solving the following (generalized) Lyapunov equations:
\begin{align}
A^TQ_{0}+Q_{0}A+C^TC&=0,\nonumber\\
A^TQ_{i}+Q_{i}A+M_iP M_i&=0,\nonumber\\
A^TQ+QA+C^TC+\sum_{i=1}^{p}\big(M_iP M_i\big)&=0.\nonumber
\end{align}
\section{Balanced Truncation (BT)}
Let us define the input energy functional $E_c(x_o)$ as the minimum energy required to bring a state $x(t)$ from a nonzero initial condition to zero. Additionally, let us define the output energy functional $E_o(x_o)$ as the output energy produced by a state with a nonzero initial condition. It is established in \cite{benner2021gramians} that the following relationships hold:
\begin{align}
E_c(x_o)&=x_o^TP^{-1}x_o,\label{eq:6}\\
E_o(x_o)&\leq x_o^TQx_o(1+x_o^TP^{-1}x_o).\label{eq:7}
\end{align}
It is evident from (\ref{eq:6}) and (\ref{eq:7}) that the states with weak controllability (associated with small singular values of $P$) and weak observability (associated with small singular values of $Q$) contribute minimally to the input-output energy transfer and thus can be truncated to obtain a ROM. In BT, the realization of $(A,B,C,M_1,M_2,\cdots,M_p)$ undergoes a similarity transformation to achieve a balanced realization, ensuring that
\begin{align}
P=Q=\text{diag}(\sigma_1,\sigma_2,\cdots,\sigma_n),\nonumber
\end{align}
wherein $\sigma_1\geq\sigma_2\geq\cdots\sigma_n$. Each state in a balanced realization has equal controllability and observability. In BT, the $r$ states with the highest controllability and observability are preserved, while the remaining $n-r$ states are discarded. The projection matrices are calculated to satisfy
\begin{align}
W_r^TPW_r= V_r^TQV_r=diag(\sigma_1,\sigma_2,\cdots,\sigma_r).\nonumber
\end{align} Unlike the standard LTI scenario, the ROM resulting from truncating a balanced realization $(A_r,B_r,C_r,M_{1,r},M_{2,r},\cdots,M_{p,r})$ of an LTI-QO system is not in turn a balanced realization.
\section{Time-limited Balanced Truncation (TLBT)}
In time-limited MOR, the goal is to ensure that $y(t)-y_r(t)$ remains small within the specified time interval $[0,\tau]$ sec. BT, on the other hand, prioritizes retaining states that exhibit strong controllability and observability across the entire time span $[0,\infty]$ sec. However, these retained states may not necessarily be the most strongly controllable and observable within the desired time interval $[0,\tau]$ sec. Consequently, their contribution to $E_c(x_o)\Big|_{t=0}^{\tau}$ and $E_o(x_o)\Big|_{t=0}^{\tau}$ might not be significant. Thus, for the time-limited MOR problem, BT might not be the most suitable approach. Therefore, our focus shifts towards preserving states with the strongest controllability and observability within the desired time interval $[0,\tau]$ sec.
\subsection{Time-limited Gramians}
Let us begin by examining the single-output system (\ref{eq:1}) for simplicity. Later in this subsection, we will generalize these results to the multi-output scenario (\ref{oe:m}). Since the state equations in standard LTI systems and LTI-QO systems are the same, the definition of the time-limited controllability Gramian remains unchanged. The time-limited controllability Gramian $P_\tau = P\big|_{t=0}^{\tau}$ is defined as
\begin{align}
P_\tau = \int_{0}^{\tau} e^{At} B B^T e^{A^Tt} dt.
\end{align}It is shown in \cite{gawronski1990model} that $P_\tau$ can be computed by solving the following Lyapunov equation
\begin{align}
AP_\tau+P_\tau A^T+BB^T-e^{A\tau}BB^Te^{A^T\tau}=0.
\end{align}
The time-limited observability Gramian $Q_\tau=Q\big|_{t=0}^{\tau}$ for LTI-QO systems described by the realization (\ref{eq:1}) can be defined as
\begin{align}
Q_\tau&=\int_{0}^{\tau}e^{A^T\tau_1}M\Big(\int_{0}^{\tau}e^{A\tau_2}BB^Te^{A^T\tau_2}d\tau_2\Big)Me^{A\tau_1}d\tau_1\nonumber\\
&=\int_{0}^{\tau}e^{A^T\tau_1}MP_\tau Me^{A\tau_1}d\tau_1.\label{tlog}
\end{align}
Before proving that $Q_\tau$ satisfies a Lyapunov equation, we need to establish certain results. Let us introduce $\hat{Q}_\tau$ as follows
\begin{align}
\hat{Q}_\tau=\int_{0}^{\tau}e^{A^T\tau_1}MPMe^{A\tau_1}d\tau_1.
\end{align}
\begin{theorem}\label{th1}
The following relationship holds between $Q$ and $\hat{Q}_\tau$:
\begin{align}Q=e^{A^T\tau}Qe^{A\tau}+\hat{Q}_\tau.\label{eq:13}\end{align}
\end{theorem}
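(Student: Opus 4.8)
The plan is to work directly from the integral representation of $Q$ given in (\ref{eq:siog}) and exploit the additivity of the integral over its domain together with the semigroup property of the matrix exponential. Since both $Q$ and $\hat{Q}_\tau$ share the identical integrand $e^{A^T\tau_1}MPMe^{A\tau_1}$ and differ only in their upper limit of integration, I would first split the infinite-interval integral defining $Q$ at the point $\tau_1=\tau$:
\begin{align}
Q=\int_{0}^{\tau}e^{A^T\tau_1}MPMe^{A\tau_1}\,d\tau_1+\int_{\tau}^{\infty}e^{A^T\tau_1}MPMe^{A\tau_1}\,d\tau_1.\nonumber
\end{align}
The first term is, by definition, exactly $\hat{Q}_\tau$, so it remains only to identify the tail integral with $e^{A^T\tau}Qe^{A\tau}$.

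For the tail integral I would apply the change of variables $\tau_1=s+\tau$, under which $d\tau_1=ds$ and the limits $\tau_1\in[\tau,\infty)$ become $s\in[0,\infty)$. Using the semigroup identities $e^{A(s+\tau)}=e^{A\tau}e^{As}$ and $e^{A^T(s+\tau)}=e^{A^T\tau}e^{A^Ts}$, the integrand factors as $e^{A^T\tau}\big(e^{A^Ts}MPMe^{As}\big)e^{A\tau}$. Because $e^{A^T\tau}$ and $e^{A\tau}$ are constant with respect to $s$, they can be pulled outside the integral, leaving
\begin{align}
\int_{\tau}^{\infty}e^{A^T\tau_1}MPMe^{A\tau_1}\,d\tau_1=e^{A^T\tau}\Big(\int_{0}^{\infty}e^{A^Ts}MPMe^{As}\,ds\Big)e^{A\tau}=e^{A^T\tau}Qe^{A\tau},\nonumber
\end{align}
which, combined with the previous display, yields (\ref{eq:13}).

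This argument is essentially elementary, so I do not anticipate a serious obstacle; the only points requiring care are the justification that the improper integral defining $Q$ converges and that pulling the exponential factors through the integral sign is legitimate. Both follow from the standing assumption that $A$ is Hurwitz: the norm $\|e^{At}\|$ decays exponentially, guaranteeing absolute convergence of all the integrals involved and hence the validity of the interchange. One should also note that the semigroup factorization relies only on $A^T\tau$ commuting with $A^Ts$ (and likewise for $A$), which holds automatically since each is a scalar multiple of the same matrix.
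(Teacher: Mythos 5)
Your proof is correct, and it takes a genuinely different route from the paper. You argue directly from the integral representation: split $\int_0^\infty$ at $\tau_1=\tau$, identify the head with $\hat{Q}_\tau$, and reduce the tail to $e^{A^T\tau}Qe^{A\tau}$ via the substitution $\tau_1=s+\tau$ and the semigroup property. The one point on which the argument genuinely hinges — and which you use correctly — is that the middle factor $MPM$ in the integrands of both $Q$ and $\hat{Q}_\tau$ involves the \emph{full} infinite-horizon Gramian $P$, a constant matrix independent of the integration variable; the same splitting would fail for $Q_\tau$ itself, whose inner Gramian is $P_\tau$, which is precisely why the paper introduces $\hat{Q}_\tau$ and $\bar{Q}_\tau$ in the first place. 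The paper instead proceeds by an ODE-uniqueness argument: it treats both sides of (\ref{eq:13}) as functions $\mathscr{L}(\tau)$, $\mathscr{R}(\tau)$, checks they agree at $\tau=0$, and differentiates in $\tau$, using the Lyapunov equation $A^TQ+QA+MPM=0$ to show $\frac{d}{d\tau}\mathscr{R}(\tau)=0$, so both are the unique solution of the same initial-value problem. Your approach is more elementary — it needs only absolute convergence of the improper integral (guaranteed by $A$ Hurwitz, as you note) and never invokes (\ref{eq:og}) — and it transfers verbatim to the companion identity $\bar{Q}=e^{A^T\tau}\bar{Q}e^{A\tau}+\bar{Q}_\tau$, since there too the middle matrix $Me^{A\tau}Pe^{A^T\tau}M$ is constant with respect to the integration variable. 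What the paper's differential technique buys is uniformity with the rest of the development: the same differentiate-and-use-the-Lyapunov-equation pattern underlies Proposition \ref{prop1} and the omitted proofs, whereas your computation is a standalone shortcut for the shift identities specifically.
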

\begin{proof}
The left-hand side of (\ref{eq:13}) can be expressed as follows:
\begin{align}
\mathscr{L}(\tau)=Q=\int_{0}^{\infty}e^{A^T\tau_1}M\Big(\int_{0}^{\infty} e^{A\tau_2} B B^T e^{A^T\tau_2} d\tau_2\Big) Me^{A\tau_1}d\tau_1.\nonumber
\end{align}
It is evident that $\mathscr{L}(0)=Q$. Furthermore, from the fundamental theorem of calculus, $\frac{d}{d\tau}\mathscr{L}(\tau)=0$. On the other hand, the right-hand side of (\ref{eq:13}) can be expressed as follows:
\begin{align}
\mathscr{R}(\tau)=e^{A^T\tau}\mathscr{L}(\tau)e^{A\tau}+\int_{0}^{\tau}e^{A^T\tau_1}M\Big(\int_{0}^{\infty}e^{A\tau_2}BB^Te^{A^T\tau_2}d\tau_2\Big)Me^{A\tau_1}d\tau_1\nonumber.
\end{align}
Again, it is evident that $\mathscr{R}(0)=Q$. By using the fundamental theorem of calculus, we can compute the derivative of $\mathscr{R}(\tau)$ with respect to $\tau$ as follows:
\begin{align}
\frac{d}{d\tau}\mathscr{R}(\tau)&=e^{A^T\tau}A^TQe^{A\tau}+e^{A^T\tau}QAe^{A\tau}+e^{A^T\tau}MPMe^{A\tau}\nonumber\\
&=e^{A^T\tau}\Big(A^TQ+QA+MPM\Big)e^{A\tau}\nonumber\\
&=0.\nonumber
\end{align}
It is now clear that both sides represent unique solutions to the same differential equation. Thus, $\mathscr{L}(\tau)=\mathscr{R}(\tau)$.
\end{proof}
\begin{proposition}\label{prop1}
$\hat{Q}_\tau$ can be computed by solving the following Lyapunov equation:
\begin{align}A^T\hat{Q}_\tau+\hat{Q}_\tau A + MP M-e^{A^T\tau}MP Me^{A\tau}=0.\label{eq:qhat}\end{align}
\end{proposition}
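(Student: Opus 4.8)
The plan is to recognize the integrand defining $\hat{Q}_\tau$ as a perfect derivative and then apply the fundamental theorem of calculus directly. First I would form the combination $A^T\hat{Q}_\tau+\hat{Q}_\tau A$ by moving the constant matrices $A^T$ and $A$ inside the integral over $\tau_1$. The key observation is the product-rule identity
\begin{align}
\frac{d}{d\tau_1}\Big(e^{A^T\tau_1}MPMe^{A\tau_1}\Big)=A^Te^{A^T\tau_1}MPMe^{A\tau_1}+e^{A^T\tau_1}MPMe^{A\tau_1}A,\nonumber
\end{align}
whose right-hand side is exactly the integrand of $A^T\hat{Q}_\tau+\hat{Q}_\tau A$. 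Integrating this identity from $0$ to $\tau$ then gives $A^T\hat{Q}_\tau+\hat{Q}_\tau A=e^{A^T\tau}MPMe^{A\tau}-MPM$, which rearranges immediately to the claimed Lyapunov equation \eqref{eq:qhat}.

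A second route, which ties the result to Theorem \ref{th1}, is to substitute $\hat{Q}_\tau=Q-e^{A^T\tau}Qe^{A\tau}$ (obtained by rearranging \eqref{eq:13}) into the left-hand side of \eqref{eq:qhat}. Using the fact that $A$ commutes with $e^{A\tau}$, and $A^T$ with $e^{A^T\tau}$, the terms involving $Q$ collapse to $A^TQ+QA$ and $e^{A^T\tau}\big(A^TQ+QA\big)e^{A\tau}$. Each bracket can then be replaced by $-MPM$ via the observability Lyapunov equation \eqref{eq:og}, and the resulting expression cancels to zero. This confirms the same identity by leveraging the already-established Theorem \ref{th1}.

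I expect no serious obstacle here; the only points requiring a little care are justifying the differentiation under the integral sign and the commutation of $A$ with its own matrix exponential, both of which are standard. I would present the direct derivative argument as the main proof, since it is fully self-contained and mirrors the telescoping structure already used in the proof of Theorem \ref{th1}, while noting the second route as a consistency check against \eqref{eq:13} and \eqref{eq:og}.
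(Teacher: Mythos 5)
Your proof is correct, and your primary argument takes a genuinely different route from the paper. The paper proves this proposition purely algebraically: it substitutes the splitting identity $Q=e^{A^T\tau}Qe^{A\tau}+\hat{Q}_\tau$ from Theorem \ref{th1} into the infinite-horizon Lyapunov equation \eqref{eq:og}, commutes $A$ past $e^{A\tau}$, and replaces $e^{A^T\tau}\big(A^TQ+QA\big)e^{A\tau}$ by $-e^{A^T\tau}MPMe^{A\tau}$ --- which is exactly your ``second route,'' so your consistency check coincides with the paper's actual proof. Your main argument instead works directly from the integral definition: pulling $A^T$ and $A$ inside the finite integral, recognizing the integrand of $A^T\hat{Q}_\tau+\hat{Q}_\tau A$ as $\frac{d}{d\tau_1}\big(e^{A^T\tau_1}MPMe^{A\tau_1}\big)$, and applying the fundamental theorem of calculus to get $A^T\hat{Q}_\tau+\hat{Q}_\tau A=e^{A^T\tau}MPMe^{A\tau}-MPM$. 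This direct computation checks out and is arguably cleaner: it is self-contained, does not invoke Theorem \ref{th1} or the existence of the infinite-horizon Gramian $Q$ at all (only $P$ must exist, which $\hat{Q}_\tau$ already presupposes), and it sidesteps the uniqueness-of-solutions argument that underpins Theorem \ref{th1}. It is also the same telescoping technique the paper itself uses in the proof of Corollary \ref{col1}, just over a compact interval, where the boundary term at $\tau_1=\tau$ survives instead of vanishing at infinity --- so your proof fits the paper's toolkit naturally. What the paper's route buys in exchange is economy: having already established Theorem \ref{th1}, the substitution makes the proposition a two-line corollary and makes explicit how the finite-horizon equation inherits its inhomogeneity from \eqref{eq:og}. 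Your worry about differentiating under the integral sign is moot in your main route: you only use linearity of the integral and the FTC for a continuous integrand on a compact interval, which needs no further justification.
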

\begin{proof}
Upon substituting (\ref{eq:13}) into (\ref{eq:og}), we obtain:
\begin{align}
A^T(e^{A^T\tau}Qe^{A\tau}+\hat{Q}_\tau)+(e^{A^T\tau}Qe^{A\tau}+\hat{Q}_\tau)A+MPM=0\nonumber
\end{align} Since $Ae^{A\tau}=e^{A\tau}A$, we obtain:
\begin{align}
e^{A^T\tau}A^TQe^{A\tau}+A^T\hat{Q}_\tau+e^{A^T\tau}QAe^{A\tau}+\hat{Q}_\tau A+MPM=0\nonumber\\
A^T\hat{Q}_\tau+\hat{Q}_\tau A+e^{A^T\tau}\big(A^TQ+QA\big)e^{A\tau}+MP M=0\nonumber\\
A^T\hat{Q}_\tau+\hat{Q}_\tau A + MP M-e^{A^T\tau}MP Me^{A\tau}=0.\nonumber
\end{align}
\end{proof}
Next, we define $\bar{Q}$ and $\bar{Q}_\tau$ as follows:
\begin{align}
\bar{Q}&=\int_{0}^{\infty}e^{A^T\tau_1}Me^{A\tau}Pe^{A^T\tau}Me^{A\tau_1}d\tau_1,\label{eq:17}\\
\bar{Q}_\tau&=\int_{0}^{\tau}e^{A^T\tau_1}Me^{A\tau}Pe^{A^T\tau}Me^{A\tau_1}d\tau_1.
\end{align}
\begin{corollary}\label{col1}$\bar{Q}$ can be computed by solving the following Lyapunov equation:
\begin{align}
A^T\bar{Q}+\bar{Q}A+Me^{A\tau}Pe^{A^T\tau}M=0.\label{eq:19}
\end{align}
\end{corollary}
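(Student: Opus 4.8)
The plan is to recognize that $\bar{Q}$, despite its notation, is nothing more than a standard observability-type Gramian in disguise. For the \emph{fixed} endpoint $\tau$, the matrix $N:=Me^{A\tau}Pe^{A^T\tau}M$ is constant with respect to the integration variable $\tau_1$, so the defining integral (\ref{eq:17}) collapses to
\begin{align}
\bar{Q}=\int_{0}^{\infty}e^{A^T\tau_1}Ne^{A\tau_1}d\tau_1.\nonumber
\end{align}
This is exactly the integral representation of the unique solution of the Lyapunov equation $A^T\bar{Q}+\bar{Q}A+N=0$, which upon resubstituting $N$ is precisely the claimed identity (\ref{eq:19}). So the whole corollary reduces to the familiar observability-Gramian/Lyapunov correspondence that already produced (\ref{eq:og}), now applied to the weighting matrix $N$ in place of $MPM$.

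First I would confirm that the integral is well defined: since $A$ is assumed Hurwitz, $\|e^{A\tau_1}\|$ decays exponentially and $N$ is a fixed matrix, so the integrand is exponentially bounded and $\bar{Q}$ is finite. Next I would verify the Lyapunov equation directly through the fundamental theorem of calculus, in the same spirit as the derivative computation used in the proof of Theorem~\ref{th1}. Using $Ae^{A\tau_1}=e^{A\tau_1}A$ one identifies the integrand of $A^T\bar{Q}+\bar{Q}A$ as a total derivative,
\begin{align}
A^T\bar{Q}+\bar{Q}A=\int_{0}^{\infty}\frac{d}{d\tau_1}\Big(e^{A^T\tau_1}Ne^{A\tau_1}\Big)d\tau_1=\Big[e^{A^T\tau_1}Ne^{A\tau_1}\Big]_{0}^{\infty}.\nonumber
\end{align}
The upper limit vanishes because $A$ is Hurwitz, while the lower limit evaluates to $N$, giving $A^T\bar{Q}+\bar{Q}A=-N$, i.e. (\ref{eq:19}).

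I do not expect a genuine obstacle here, as the statement is a direct specialization of a standard result; the step deserving the most care is notational rather than analytical. One must keep the fixed interval endpoint $\tau$ strictly separated from the dummy integration variable $\tau_1$, so that the factor $e^{A\tau}Pe^{A^T\tau}$ is correctly treated as a constant weighting matrix and pulled outside the $\tau_1$-differentiation. Provided this bookkeeping is respected, the convergence and the total-derivative computation both go through routinely, and the corollary follows.
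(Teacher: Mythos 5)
Your proposal is correct and follows essentially the same route as the paper's own proof: both substitute the defining integral into $A^T\bar{Q}+\bar{Q}A$, recognize the integrand as the total derivative $\frac{d}{d\tau_1}\big(e^{A^T\tau_1}Me^{A\tau}Pe^{A^T\tau}Me^{A\tau_1}\big)$, and evaluate the boundary terms using the Hurwitz property of $A$. Your abbreviation of the constant weighting matrix as $N$ and the explicit well-definedness check are cosmetic additions to the identical argument.
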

\begin{proof}
Substituting (\ref{eq:17}) into $A^T\bar{Q}+\bar{Q}A$, we have:
\begin{align}
A^T\bar{Q}+\bar{Q}A&=\int_{0}^{\infty}\Big(A^Te^{A^T\tau_1}Me^{A\tau}Pe^{A^T\tau}Me^{A\tau_1}\nonumber\\
&\hspace*{2cm}+e^{A^T\tau_1}Me^{A\tau}Pe^{A^T\tau}Me^{A\tau_1}A\Big)d\tau_1\nonumber\\
&=\int_{0}^{\infty}\frac{d}{d\tau_1}\big(e^{A^T\tau_1}Me^{A\tau}Pe^{A^T\tau}Me^{A\tau_1}\big)d\tau_1\nonumber\\
&=e^{A^T\tau_1}Me^{A\tau}Pe^{A^T\tau}Me^{A\tau_1}\Bigg|_{\tau_1=0}^{\infty}\nonumber\\
&=0-e^{0}Me^{A\tau}Pe^{A^T\tau}Me^{0}\nonumber\\
&=-Me^{A\tau}Pe^{A^T\tau}M.\nonumber
\end{align}
Therefore, $\bar{Q}$ satisfies (\ref{eq:19}).
\end{proof}
\begin{theorem}
The relationship between $\bar{Q}$ and $\bar{Q}_\tau$ is given by:
\begin{align}\bar{Q}=e^{A^T\tau}\bar{Q}e^{A\tau}+\bar{Q}_\tau.\end{align}
\end{theorem}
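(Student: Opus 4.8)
The plan is to mirror the proof of Theorem~\ref{th1}, since the present statement is structurally identical to it. Observe that both $\bar{Q}$ and $\bar{Q}_\tau$ are integrals over the variable $\tau_1$ of the expression $e^{A^T\tau_1}X e^{A\tau_1}$, where $X=Me^{A\tau}Pe^{A^T\tau}M$ is a fixed symmetric matrix that does not depend on the integration variable $\tau_1$. Thus the pair $(\bar{Q},\bar{Q}_\tau)$ plays exactly the role that $(Q,\hat{Q}_\tau)$ played in Theorem~\ref{th1}, with the constant source $MPM$ replaced by $X$, and with Corollary~\ref{col1} supplying the corresponding Lyapunov equation $A^T\bar{Q}+\bar{Q}A+X=0$. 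The first thing I would record is this observation, which makes clear that any of the techniques already used in the excerpt applies.

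The most direct route, which I would present as the main argument, is to split the semi-infinite integral at $\tau_1=\tau$: writing $\bar{Q}=\int_0^\tau e^{A^T\tau_1}X e^{A\tau_1}\,d\tau_1+\int_\tau^\infty e^{A^T\tau_1}X e^{A\tau_1}\,d\tau_1$, the first piece is by definition $\bar{Q}_\tau$. For the tail I would substitute $\tau_1=s+\tau$, which turns it into $\int_0^\infty e^{A^T(s+\tau)}X e^{A(s+\tau)}\,ds$. Using that $A$ commutes with its own exponential, so that $e^{A(s+\tau)}=e^{As}e^{A\tau}$ and likewise for the transpose, the fixed factors $e^{A^T\tau}$ and $e^{A\tau}$ pull out on the left and right respectively, leaving $e^{A^T\tau}\big(\int_0^\infty e^{A^Ts}X e^{As}\,ds\big)e^{A\tau}=e^{A^T\tau}\bar{Q}e^{A\tau}$. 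Adding the two pieces gives $\bar{Q}=\bar{Q}_\tau+e^{A^T\tau}\bar{Q}e^{A\tau}$, which is the claim. This route is self-contained and does not even require Corollary~\ref{col1}.

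To match the stylistic presentation of Theorem~\ref{th1}, I could alternatively set $\mathscr{R}(t)=e^{A^Tt}\bar{Q}e^{At}+\int_0^t e^{A^T\tau_1}X e^{A\tau_1}\,d\tau_1$ and $\mathscr{L}(t)=\bar{Q}$, where $t$ is a fresh variable and $X$ is held fixed. Then $\mathscr{L}(0)=\mathscr{R}(0)=\bar{Q}$, clearly $\frac{d}{dt}\mathscr{L}(t)=0$, and by the fundamental theorem of calculus together with Corollary~\ref{col1} one finds $\frac{d}{dt}\mathscr{R}(t)=e^{A^Tt}\big(A^T\bar{Q}+\bar{Q}A+X\big)e^{At}=0$. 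Since both sides solve the same first-order differential equation with the same value at $t=0$, they coincide for all $t$, and evaluating at $t=\tau$ yields the result.

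The only point that requires care --- and the reason I favor the splitting argument --- is the double appearance of $\tau$: here $\tau$ enters both the integrand, through $X=Me^{A\tau}Pe^{A^T\tau}M$, and the integration limit of $\bar{Q}_\tau$. This means one must \emph{not} differentiate with respect to $\tau$ itself (as was legitimate in Theorem~\ref{th1}, where the source $MPM$ was $\tau$-free), since that would produce spurious cross terms from $\frac{d}{d\tau}X$. Introducing the separate shift variable $t$ and freezing $X$ circumvents this entirely, and the change-of-variables argument sidesteps it altogether by treating $X$ as a frozen constant throughout.
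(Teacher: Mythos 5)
Your proposal is correct, and it takes a genuinely different route from the paper, which omits the proof entirely with the remark that it is ``similar to that of Theorem~\ref{th1}.'' Your main argument --- splitting $\bar{Q}=\int_0^\infty e^{A^T\tau_1}Xe^{A\tau_1}\,d\tau_1$ at $\tau_1=\tau$ and shifting the tail via $\tau_1=s+\tau$, so that the semigroup property $e^{A(s+\tau)}=e^{As}e^{A\tau}$ pulls the fixed factors out and identifies the tail as $e^{A^T\tau}\bar{Q}e^{A\tau}$ --- is more elementary and self-contained than the paper's intended argument: it needs neither Corollary~\ref{col1} nor the uniqueness-of-solutions reasoning for the differential equation, and it would in fact also prove Theorem~\ref{th1} directly. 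Your second, ``paper-style'' argument (freeze $X=Me^{A\tau}Pe^{A^T\tau}M$, introduce a fresh variable $t$, show both sides agree at $t=0$ and have vanishing $t$-derivative, then evaluate at $t=\tau$) is presumably what the authors have in mind, and your cautionary remark is substantive rather than pedantic: a \emph{literal} repetition of the proof of Theorem~\ref{th1}, differentiating with respect to $\tau$ itself, would be invalid here because $\tau$ now appears both in the integration limit and inside the source term $X$, so $\mathscr{L}(\tau)=\bar{Q}$ is no longer constant in $\tau$ and the derivative computation would pick up cross terms from $\frac{d}{d\tau}X$. Identifying and repairing exactly the point where the paper's ``similar proof'' claim needs care is a genuine contribution of your write-up; both of your arguments are complete and sound.
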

\begin{proof} The proof is similar to that of Theorem \ref{th1}, and is thus omitted for brevity.
\end{proof}
\begin{proposition}
$\bar{Q}_\tau$ can be computed by solving the following Lyapunov equation:
\begin{align}A^T\bar{Q}_\tau+\bar{Q}_\tau A + Me^{A\tau}Pe^{A^T\tau}M-e^{A^T\tau}Me^{A\tau}Pe^{A^T\tau}Me^{A\tau}=0.\label{qbar}\end{align}
\end{proposition}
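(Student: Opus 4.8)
The plan is to imitate exactly the argument used in Proposition \ref{prop1}, substituting the just-established splitting identity into the Lyapunov equation that $\bar{Q}$ satisfies. The two ingredients are already in hand: the preceding theorem gives the decomposition $\bar{Q}=e^{A^T\tau}\bar{Q}e^{A\tau}+\bar{Q}_\tau$, and Corollary \ref{col1} gives the Lyapunov equation $A^T\bar{Q}+\bar{Q}A+Me^{A\tau}Pe^{A^T\tau}M=0$. So the entire proof reduces to algebraic manipulation, with no genuine analytic obstacle.

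First I would substitute $\bar{Q}=e^{A^T\tau}\bar{Q}e^{A\tau}+\bar{Q}_\tau$ into the Lyapunov equation of Corollary \ref{col1}, writing
\begin{align}
A^T\big(e^{A^T\tau}\bar{Q}e^{A\tau}+\bar{Q}_\tau\big)+\big(e^{A^T\tau}\bar{Q}e^{A\tau}+\bar{Q}_\tau\big)A+Me^{A\tau}Pe^{A^T\tau}M=0.\nonumber
\end{align}
Next I would invoke the commutativity relations $Ae^{A\tau}=e^{A\tau}A$ and $A^Te^{A^T\tau}=e^{A^T\tau}A^T$ to move each matrix exponential past the corresponding copy of $A$ or $A^T$, so that the two terms carrying an outer $e^{A^T\tau}\,(\cdot)\,e^{A\tau}$ sandwich can be collected. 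This produces
\begin{align}
A^T\bar{Q}_\tau+\bar{Q}_\tau A+e^{A^T\tau}\big(A^T\bar{Q}+\bar{Q}A\big)e^{A\tau}+Me^{A\tau}Pe^{A^T\tau}M=0.\nonumber
\end{align}

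Finally I would apply Corollary \ref{col1} a second time, using $A^T\bar{Q}+\bar{Q}A=-Me^{A\tau}Pe^{A^T\tau}M$ to replace the bracketed term, which turns the sandwiched expression into $-e^{A^T\tau}Me^{A\tau}Pe^{A^T\tau}Me^{A\tau}$ and yields precisely the claimed equation (\ref{qbar}). The only point requiring care is the bookkeeping of the matrix exponentials when collecting the sandwiched terms; since $A$ commutes with $e^{A\tau}$ this is routine, and there is no real difficulty beyond tracking the placement of the $e^{A\tau}$ and $e^{A^T\tau}$ factors.
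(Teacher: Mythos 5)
Your proposal is correct and is exactly the argument the paper intends: the paper omits this proof, stating it is ``similar to that of Proposition~\ref{prop1},'' and your substitution of the decomposition $\bar{Q}=e^{A^T\tau}\bar{Q}e^{A\tau}+\bar{Q}_\tau$ into the Lyapunov equation of Corollary~\ref{col1}, followed by the commutativity of $A$ with $e^{A\tau}$ and a second application of Corollary~\ref{col1} to eliminate $A^T\bar{Q}+\bar{Q}A$, is precisely that argument carried out in full. The algebra checks out and yields equation~(\ref{qbar}) as claimed.
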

\begin{proof} The proof is similar to that of Proposition \ref{prop1}, and is thus omitted for brevity.
\end{proof}
\begin{proposition}
The following relationship holds between $Q_\tau$, $\hat{Q}_\tau$, and $\bar{Q}_\tau$:
\begin{align}
Q_\tau=\hat{Q}_\tau-\bar{Q}_\tau.
\end{align}
\end{proposition}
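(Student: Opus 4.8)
The plan is to reduce the entire identity to a single algebraic relationship between the full controllability Gramian $P$ and its time-limited counterpart $P_\tau$, after which the result follows by mere substitution and splitting of an integral. The key observation I would establish first is that
\begin{align}
P_\tau = P - e^{A\tau}Pe^{A^T\tau}.\nonumber
\end{align}
This is the controllability-side analog of the relation $Q = e^{A^T\tau}Qe^{A\tau}+\hat{Q}_\tau$ proved in Theorem~\ref{th1}, and it is what makes the three observability-type objects comparable.

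Granting this relationship, the remainder is immediate. I would substitute it directly into the defining integral for $Q_\tau$ in (\ref{tlog}), writing
\begin{align}
Q_\tau = \int_{0}^{\tau}e^{A^T\tau_1}M\big(P - e^{A\tau}Pe^{A^T\tau}\big)Me^{A\tau_1}\,d\tau_1,\nonumber
\end{align}
and then use linearity of the integral to split this into two terms. The first term is exactly $\hat{Q}_\tau=\int_{0}^{\tau}e^{A^T\tau_1}MPMe^{A\tau_1}\,d\tau_1$, while the second term matches the definition of $\bar{Q}_\tau=\int_{0}^{\tau}e^{A^T\tau_1}Me^{A\tau}Pe^{A^T\tau}Me^{A\tau_1}\,d\tau_1$ verbatim, yielding $Q_\tau=\hat{Q}_\tau-\bar{Q}_\tau$.

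The only real work, and hence the step I expect to be the crux, is verifying the Gramian relationship $P_\tau = P - e^{A\tau}Pe^{A^T\tau}$. I see two clean routes. The direct route starts from $P-P_\tau=\int_{\tau}^{\infty}e^{At}BB^Te^{A^Tt}\,dt$, performs the change of variable $t=\tau+s$, and factors the fixed exponentials out of the integral using $e^{A(\tau+s)}=e^{A\tau}e^{As}$, giving $e^{A\tau}\big(\int_0^\infty e^{As}BB^Te^{A^Ts}\,ds\big)e^{A^T\tau}=e^{A\tau}Pe^{A^T\tau}$. The alternative algebraic route subtracts the two Lyapunov equations governing $P$ and $P_\tau$ and checks, via the commutation $Ae^{A\tau}=e^{A\tau}A$, that $e^{A\tau}Pe^{A^T\tau}$ satisfies the resulting equation, so that uniqueness forces $P-P_\tau=e^{A\tau}Pe^{A^T\tau}$. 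Either way the argument is short, and once it is in place the proposition reduces to bookkeeping.
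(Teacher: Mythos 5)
Your proposal is correct and follows essentially the same route as the paper: substitute $P_\tau = P - e^{A\tau}Pe^{A^T\tau}$ into the defining integral (\ref{tlog}) for $Q_\tau$ and split by linearity into $\hat{Q}_\tau$ and $\bar{Q}_\tau$. The only difference is that the paper simply cites \cite{gawronski1990model} for the relation $P_\tau = P - e^{A\tau}Pe^{A^T\tau}$, whereas you supply a (correct) verification of it via the change of variables $t=\tau+s$, or alternatively via uniqueness of solutions to the subtracted Lyapunov equations.
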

\begin{proof}
It is shown in \cite{gawronski1990model} that $P_\tau=P-e^{A\tau}Pe^{A^T\tau}$. Thus $Q_\tau$ can be written as
\begin{align}
Q_\tau&=\int_{0}^{\tau}e^{A^T\tau_1}MPMe^{A\tau_1}d\tau_1-\int_{0}^{\tau}e^{A^T\tau_1}Me^{A\tau}Pe^{A^T\tau}Me^{A\tau_1}d\tau_1\nonumber\\
&=\hat{Q}_\tau-\bar{Q}_\tau.\nonumber
\end{align}
\end{proof}
Finally, by subtracting (\ref{qbar}) from (\ref{eq:qhat}), we find that $Q_\tau$ satisfies the Lyapunov equation:
\begin{align}
A^TQ_\tau+Q_\tau A+ MP_\tau M-e^{A^T\tau}MP_\tau Me^{A\tau}=0.
\end{align}
We are now ready to define the time-limited observability Gramian $Q_\tau = Q\big|_{t=0}^{\tau}$ for the multi-output case. $Q_\tau$ for the multi-output case can be defined as
\begin{align}
Q_\tau&=\int_{0}^{\tau}e^{A^Tt}C^TCe^{At}dt\nonumber\\
&\hspace*{1cm}+\int_{0}^{\tau}e^{A^T\tau_1}\Bigg(\sum_{i=1}^{p}M_i\Big(\int_{0}^{\tau}e^{A\tau_2}BB^Te^{A^T\tau_2}d\tau_2\Big)M_i\Bigg)e^{A\tau_1}d\tau_1\nonumber\\
&=\int_{0}^{\tau}e^{A^Tt}C^TCe^{At}dt+\sum_{i=1}^{p}\Big(\int_{0}^{\tau}e^{A^T\tau_1}M_iP_\tau M_ie^{A\tau_1}d\tau_1\Big),\nonumber\\
&=Q_{0,\tau}+\sum_{i=1}^{p}Q_{i,\tau}.
\end{align}
It is clear that $Q_{0,\tau}$ corresponds to the time-limited observability Gramian for the standard LTI case, as defined in \cite{gawronski1990model}, while $Q_{i,\tau}$ is analogous to the time-limited observability Gramian defined in (\ref{tlog}). Based on that, it can readily be noted that the following Lyapunov equations hold:
\begin{align}
A^TQ_{0,\tau}+Q_{0,\tau}A+C^TC-e^{A^T\tau}C^TCe^{A\tau}&=0,\nonumber\\
A^TQ_{i,\tau}+Q_{i,\tau}A+M_iP_\tau M_i-e^{A^T\tau}M_iP_\tau M_ie^{A\tau}&=0,\nonumber\\
A^TQ_{\tau}+Q_{\tau}A+C^TC-e^{A^T\tau}C^TCe^{A\tau}+\sum_{i=1}^{p}\big(M_iP_\tau M_i-e^{A^T\tau}M_iP_\tau M_ie^{A\tau}\big)&=0.\nonumber
\end{align}
\begin{remark}
For a generic time interval $[\tau_i,\tau_f]$, $P_\tau$ and $Q_\tau$ become:
\begin{align}
P_\tau&=P\big|_{t=0}^{\tau_f}-P\big|_{t=0}^{\tau_i}=P\big|_{t=\tau_i}^{\tau_f}\nonumber\\
&=\int_{\tau_i}^{\tau_f}e^{At}BB^Te^{A^Tt}dt,\label{eq:int25}\\
Q_\tau&=Q\big|_{t=0}^{\tau_f}-Q\big|_{t=0}^{\tau_i}=Q\big|_{t=\tau_i}^{\tau_f}\nonumber\\
&=\int_{\tau_i}^{\tau_f}e^{A^Tt}C^TCe^{At}dt+\sum_{i=1}^{p}\int_{\tau_i}^{\tau_f}e^{A^Tt}\big(M_iP_\tau M_i\big)e^{At}dt\nonumber\\
&=Q_{0,\tau}+\sum_{i=1}^{p}Q_{i,\tau}\label{eq:int26}
\end{align}It is evident that $P_\tau$, $Q_{0,\tau}$, $Q_{i,\tau}$, $Q_\tau$ can be computed by solving the following Lyapunov equations:
\begin{align}
AP_\tau+P_\tau A^T+e^{A\tau_i}BB^Te^{A^T\tau_i}-e^{A\tau_f}BB^Te^{A^T\tau_f}&=0,\label{eq:25}\\
A^TQ_{0,\tau}+Q_{0,\tau}A+e^{A^T\tau_i}C^TCe^{A\tau_i}-e^{A^T\tau_f}C^TCe^{A\tau_f}&=0,\nonumber\\
A^TQ_{i,\tau}+Q_{i,\tau}A+e^{A^T\tau_i}M_iP_\tau M_ie^{A\tau_i}-e^{A^T\tau_f}M_iP_\tau M_ie^{A\tau_f}&=0,\nonumber\\
A^TQ_{\tau}+Q_{\tau}A+e^{A^T\tau_i}C^TCe^{A\tau_i}-e^{A^T\tau_f}C^TCe^{A\tau_f}\hspace*{2cm}&\nonumber\\
+\sum_{i=1}^{p}\Big(e^{A^T\tau_i}M_iP_\tau M_ie^{A\tau_i}-e^{A^T\tau_f}M_iP_\tau M_ie^{A\tau_f}\Big)&=0.\label{eq:26}
\end{align}
\end{remark}
\subsection{Low-rank Approximation of the Time-limited Gramians}
A significant advancement in the efficiency of computing Gramians for linear dynamical systems has been documented over the past two decades, as highlighted in recent surveys such as \cite{benner2013numerical}. This progress stems from the observation that as system order increases, Gramians tend to have numerically low rank, facilitating accurate low-rank approximations. Time-limited Gramians exhibit even faster decay in eigenvalues compared to standard Gramians, making them particularly suitable for low-rank numerical algorithms \cite{kurschner2018balanced}. We will now briefly review some existing methods and adapt one for computing low-rank approximations of time-limited Gramians for LTI-QO systems. There are two primary approaches for obtaining low-rank approximations of time-limited Gramians. The first set of methods aims to find approximate solutions to the Lyapunov equations (\ref{eq:25}) and (\ref{eq:26}). The second set of methods seeks to approximate the integrals (\ref{eq:int25}) and (\ref{eq:int26}) to derive low-rank approximations of the Gramians.

The $LDL^T$-version of the ADI method \cite{lang2014ldlt} can be applied to obtain low-rank solutions for (\ref{eq:25}) and (\ref{eq:26}). This method offers an approximate solution to the Lyapunov equation in the form:
\begin{align}
\mathbb{A}\mathbb{P}+\mathbb{P}\mathbb{A}^T+\mathbb{K}\mathbb{S}\mathbb{K}^T&=0,\label{sneq:49}
\end{align} where $\mathbb{P}\approx\mathbb{L}\mathbb{D}\mathbb{L}^T$. By setting:
\begin{align}
\mathbb{A}&=A,\nonumber&\mathbb{K}&=\begin{bmatrix}e^{A\tau_i}B&e^{A\tau_f}B\end{bmatrix},&\mathbb{S}&=\begin{bmatrix}I&0\\0&-I\end{bmatrix},\nonumber
\end{align}a low-rank solution of (\ref{eq:25}) can be obtained as $\mathcal{P}_\tau=\mathbb{L}\mathbb{D}\mathbb{L}^T$. If $\mathcal{P}_\tau$ approximates $P_\tau$ well, it is reasonable to assume $\mathcal{P}_\tau\geq0$ since $P_\tau>0$, even if $\mathbb{D}$ is potentially indefinite \cite{benner2016frequency}. In such cases, a semidefinite factorization of $\mathcal{P}_\tau$ can be achieved as follows: Firstly, compute a thin QR-decomposition of $\mathbb{L}$ as $\mathbb{L}=U_1R$. Then, compute the eigenvalue decomposition of $R\mathbb{D}R^T$ as $R\mathbb{D}R^T=U_2\Lambda U_2^T$, where $U_2^TU_2=I$ and $\Lambda$ is a diagonal matrix containing eigenvalues. Thus, $\mathcal{P}_\tau$ can be represented as $\mathcal{P}_\tau=\mathcal{Z}_\tau\mathcal{Z}_\tau^T$, where $\mathcal{Z}_\tau=U_1U_2\Lambda^{\frac{1}{2}}$. Truncating negligible eigenvalues of $R\mathbb{D}R^T$ and their corresponding columns in $U_2$ enables rank truncation of $\mathcal{P}_\tau$ \cite{benner2016frequency}. This truncation is crucial as it reduces the computational cost of obtaining a low-rank solution for (\ref{eq:26}).

Furthermore, by setting:
\begin{align}
\mathbb{A}&=A^T,\nonumber\\\
\mathbb{K}&=\begin{bsmallmatrix}e^{A^T\tau_i}C^T&e^{A^T\tau_f}C^T&e^{A^T\tau_i}M_1\mathcal{Z}_\tau&e^{A^T\tau_f}M_1\mathcal{Z}_\tau&\cdots&e^{A^T\tau_i}M_p\mathcal{Z}_\tau&e^{A^T\tau_f}M_p\mathcal{Z}_\tau\end{bsmallmatrix},\nonumber\\
\mathbb{S}&=\begin{bmatrix}I&0&0&0&0&0&0\\0&-I&0&0&0&0&0\\0&0&I&0&0&0&0\\0&0&0&-I&0&0&0\\0&0&0&0&\ddots&0&0\\0&0&0&0&0&I&0\\0&0&0&0&0&0&-I\end{bmatrix},\nonumber
\end{align}a low-rank solution of (\ref{eq:25}) can be obtained as $\mathcal{Q}_\tau=\mathbb{L}\mathbb{D}\mathbb{L}^T$. Similarly, a semidefinite factorization $\mathcal{Q}_\tau=\mathcal{Y}_\tau\mathcal{Y}_\tau^T$ can be derived following the same approach as for $\mathcal{P}_\tau=\mathcal{Z}_\tau\mathcal{Z}_\tau^T$.

In large-scale scenarios, computing the matrix exponential $e^{At}$ is computationally intensive. To mitigate this challenge, the Krylov subspace-based method introduced in \cite{kurschner2018balanced} can be employed to approximate $e^{At}B$, $Ce^{At}$, and $\mathcal{Z}_\tau^TM_ie^{At}$. Once these matrix exponential products are approximated, low-rank solutions for (\ref{eq:25}) and (\ref{eq:26}) can be obtained using either the ADI method or the Krylov subspace-based methods described in \cite{kurschner2018balanced}. However, it is important to note a significant limitation of all Krylov subspace-based methods: they may fail when the condition $A + A^T < 0$ is not met. In contrast, the ADI method does not have this requirement, making it more versatile than the Krylov subspace-based methods.

In \cite{haider2017model}, low-rank approximations for time-limited Gramians are obtained by applying quadrature rules to the integrals defining the time-limited Gramians. This approach can also be used to compute low-rank approximations of (\ref{eq:int25}) and (\ref{eq:int26}). Nonetheless, it still necessitates the computation of $e^{At}$, which is expensive in large-scale settings. Consequently, even with a small number of nodes in the quadrature rule, the computation of $e^{At}$ renders it computationally infeasible for large-scale systems.

Until now, all discussed methods necessitate the computationally expensive task of obtaining $e^{At}$ or its approximation. We present an efficient approach for computing low-rank solutions of (\ref{eq:int25}) and (\ref{eq:int26}) without the need for computing $e^{At}$. In \cite{xiao2022model}, $e^{At}$ is replaced in the integrals defining the system Gramians with its truncated Laguerre expansion, directly providing low-rank approximations of the Gramians. However, this technique is tailored to exploit the properties of Laguerre functions specifically when the integral limits range from $0$ to $\infty$, precluding its direct application for computing low-rank approximations of (\ref{eq:int25}) and (\ref{eq:int26}), where the integral limits span from $\tau_i$ to $\tau_f$. We now propose a generalization of this method to compute low-rank approximations of (\ref{eq:int25}) and (\ref{eq:int26}).

Let us denote the $i$-th Laguerre polynomial as $L_i(t)$ \cite{xiao2022model}, defined as follows:
\begin{align}
L_{i}(t)=\frac{e^t}{i!}\frac{d^i}{dt^i}(e^{-t}t^i), && i = 0, 1, \cdots.\nonumber
\end{align}
Additionally, let us denote the scaled Laguerre functions with scaling parameter $\alpha$ $(\alpha > 0)$ as
$\phi_{i}^{\alpha}(t)$, defined as follows:
\begin{align}
\phi_{i}^{\alpha}(t)=\sqrt{2\alpha}e^{-\alpha t}L_{i}(2\alpha t) .\nonumber
\end{align}
Then, the Laguerre expansion of the matrix exponential $e^{At}$ can be expressed as:
\begin{align}
e^{At}=\sum_{i=0}^{\infty}A_i\phi_{i}^{\alpha}(t),\nonumber
\end{align}where $A_i$ are the Laguerre coefficient matrices defined as:
\begin{align}
A_i= (-1)^{i} \sqrt{2\alpha} \big( \alpha I + A \big)^{i} \big(\alpha I - A \big)^{-(i+1)}.\nonumber
\end{align}
Truncating the expansion at $N-1$ yields an optimal approximation of $e^{At}$ in the $L_2$-norm \cite{xiao2022model}:
\begin{align}
e^{At}\approx\sum_{i=0}^{N-1}A_i\phi_{i}^{\alpha}(t).\label{eq:n29}
\end{align}
The integral expression of $P_\tau$ can be approximated by replacing $e^{At}$ with its approximation, resulting in:
\begin{align}
P_\tau&\approx\int_{\tau_i}^{\tau_f}\Big(\sum_{i=0}^{N-1}A_iB\phi_{i}^{\alpha}(t)\Big)\Big(\sum_{i=0}^{N-1}A_iB\phi_{i}^{\alpha}(t)\Big)^Tdt.\nonumber\\
&=\int_{\tau_i}^{\tau_f}\begin{bmatrix}A_0B& A_1B&\cdots&A_{N-1}B\end{bmatrix}\begin{bmatrix}\phi_{0}^{\alpha}(t)I\\\phi_{1}^{\alpha}(t)I\\\vdots\\\phi_{N-1}^{\alpha}(t)I\end{bmatrix}\nonumber\\
&\hspace*{2cm}\times\begin{bmatrix}\phi_{0}^{\alpha}(t)I&\phi_{1}^{\alpha}(t)I&\cdots&\phi_{N-1}^{\alpha}(t)I\end{bmatrix}\begin{bmatrix}B^TA_0^T\\B^TA_1^T\\\vdots\\B^TA_{N-1}^T\end{bmatrix}dt.
\label{eq:29}\end{align}
Let us define $\hat{F}_\tau$, $\Phi(t)$, $\bar{D}_\tau$, and $\hat{D}_\tau$ as follows:
\begin{align}
\hat{F}_\tau&=\begin{bmatrix}A_0B& A_1B&\cdots&A_{N-1}B\end{bmatrix},\nonumber\\
\Phi(t)&=\begin{bmatrix}\phi_{0}^{\alpha}(t)&\phi_{1}^{\alpha}(t)&\cdots&\phi_{N-1}^{\alpha}(t)\end{bmatrix},\nonumber\\
\bar{D}_\tau&=\int_{\tau_i}^{\tau_f}\Phi(t)^T\Phi(t)dt,\nonumber\\
\hat{D}_\tau&=\bar{D}_\tau\otimes I_m.\nonumber
\end{align}
Then, (\ref{eq:29}) can be expressed as follows:
\begin{align}
P_\tau\approx\hat{F}_\tau\hat{D}_\tau\hat{F}_\tau^T.\nonumber
\end{align}
In the case of a finite time interval (i.e., $[\tau_i,\tau_f]$ sec), $\phi_{i}^{\alpha}(t)$ is not orthogonal, unlike the infinite interval case (i.e., $[0,\infty]$ sec), and thus $\bar{D}_\tau\neq I$. Nevertheless, $\bar{D}_\tau$ can be computed inexpensively when $N\ll n$, as the desired time interval is typically short, and a small number of nodes in any quadrature rule for numerical integration can offer good accuracy. Furthermore, it is noteworthy that $\bar{D}_\tau$ remains independent of any parameters of the dynamical system. Once an analytical expression is derived, it requires no recomputation and remains applicable to all future experiments. Therefore, the computation of $\bar{D}_\tau$ does not pose a difficulty or computational burden. For instance, we obtain the analytical expression by setting $N=2$ using MATLAB's symbolic toolbox, resulting in the following expression:
\begin{align}
\bar{D}_\tau=\begin{bmatrix}-e^{-2\alpha \tau_f}+e^{-2\alpha \tau_i}&2\alpha \tau_fe^{-2\alpha \tau_f}-2\alpha \tau_ie^{-2\alpha \tau_i}\\
2\alpha \tau_fe^{-2\alpha \tau_f}-2\alpha \tau_ie^{-2\alpha \tau_i}&-e^{-2\alpha \tau_f}(4\alpha^2\tau_f^2+1)+e^{-2\alpha \tau_i}(4\alpha^2\tau_i^2+1)
\end{bmatrix}.\nonumber
\end{align} This generic expression is applicable to any system and for any values of $\alpha$, $\tau_i$, and $\tau_f$. Hence, it is advisable to obtain an analytical expression using symbolic toolboxes available in MATLAB or Python. Once obtained, this expression allows for the straightforward substitution of desired parameters, enabling the on-the-fly computation of $\bar{D}_\tau$. If $\bar{D}_\tau$ is positive-definite, it can be decomposed into its Cholesky factorization, $\bar{D}_\tau=L_\tau L_\tau^T$. Subsequently, the approximate low-rank factors of $P_\tau$ can be obtained as follows:
\begin{align}
P_\tau\approx \mathcal{P}_\tau=\hat{F}_\tau(L_\tau L_\tau^T\otimes I_m)\hat{F}_\tau^T=\hat{F}_\tau(L_\tau\otimes I_m)(L_\tau\otimes I_m)^T\hat{F}_\tau^T=\mathcal{Z}_\tau \mathcal{Z}_\tau^T.\nonumber
\end{align}
However, it is important to note that $\bar{D}_\tau$ is not guaranteed to be positive-definite. In such cases, a semidefinite factorization $\mathcal{P}_\tau=\mathcal{Z}_\tau\mathcal{Z}_\tau^T$ can be achieved similarly, as done in the ADI method earlier. The low-rank approximation $\mathcal{Q}_\tau=\mathcal{Y}_\tau\mathcal{Y}_\tau^T$ of $Q_\tau$ can be obtained dually by replacing $(A,B)$ with $\big(A^T,\begin{bmatrix}C^T &M_1\mathcal{Z}_\tau &\cdots &M_p\mathcal{Z}_\tau\end{bmatrix}\big)$.

\begin{remark}
In \cite{xiao2022model}, it is illustrated that the low-rank Cholesky factors of the Gramians, obtained by substituting $e^{At}$ with its truncated Laguerre expansion, are equivalent to LR-ADI \cite{li2002low} if the same shift $-\alpha$ is used for all iterations. However, this equivalence does not hold true for the time-limited case. That is, employing the same shift $-\alpha$ in all iterations does not lead the $LDL^T$-version of the ADI method to reduce to the truncated Laguerre expansion-based method outlined in this subsection. Moreover, the requirement of using the same shift $-\alpha$ appears to be quite restrictive compared to the flexibility in shift choices offered by the ADI method. Nonetheless, as we will demonstrate in the numerical section, the truncated Laguerre expansion-based method proves to be effective even for arbitrary values of $\alpha$. In \cite{eid2009time}, a procedure for computing an optimal choice of $\alpha$ is discussed; however, its implementation in large-scale settings is expensive and thus not feasible.
\end{remark}
\subsection{Square Root Algorithm for TLBT}
The balanced square root algorithm is a promising and numerically stable method for BT \cite{tombs1987truncated}. It relies on the Cholesky factors of the Gramians to compute the reduction matrices $V_r$ and $W_r$. The pseudo-code for the square root algorithm tailored for TLBT in LTI-QO systems is provided in Algorithm  \ref{alg1}. If the low-rank factors of $P_\tau\approx\mathcal{P}_\tau=\mathcal{Z}_\tau\mathcal{Z}_\tau^T$ and $Q_\tau\approx\mathcal{Q}_\tau=\mathcal{Y}_\tau\mathcal{Y}_\tau^T$ are computed, Steps (\ref{tlst1}) and (\ref{tlst2}) can be accordingly replaced.
\begin{algorithm}
\caption{Square root Algorithm for TLBT}
\textbf{Input:} $(A,B,C,M_1,M_2\cdots,M_p)$; $[\tau_i,\tau_f]$; $r$.\\
\textbf{Output:} $(A_r,B_r,C_r,M_{1,r},M_{2,r},\cdots,M_{p,r})$.
\begin{algorithmic}[1]\label{alg1}
\STATE Solve equations (\ref{eq:25}) and (\ref{eq:26}) to compute $P_\tau$ and $Q_\tau$.\label{tlst1}
\STATE Compute Cholesky factorizations $P_\tau=\mathcal{Z}_\tau\mathcal{Z}_\tau^T$ and $Q_\tau=\mathcal{Y}_\tau\mathcal{Y}_\tau^T$.\label{tlst2}
\STATE Compute singular value decomposition of $\mathcal{Y}_\tau^T\mathcal{Z}_\tau=U\Sigma V^T$.
\STATE Partition $U=\begin{bmatrix}U_1&U_2\end{bmatrix}$ and $V=\begin{bmatrix}V_1&V_2\end{bmatrix}$ according to $\Sigma=diag(\Sigma_{r\times r},\Sigma_2)$.
\STATE Set $V_r=\mathcal{Z}_\tau V_1\Sigma_{r\times r}^{-\frac{1}{2}}$ and $W_r=\mathcal{Y}_\tau U_1\Sigma_{r\times r}^{-\frac{1}{2}}$.
\STATE $A_r=W_r^TAV_r$, $B_r=W_r^TB$, $C_r=CV_r$, $M_{i,r}=V_r^TM_iV_r$.
\end{algorithmic}
\end{algorithm}
\begin{remark}
Similar to the infinite interval BT \cite{benner2021gramians}, $(A_r,B_r,C_r,M_{1,r},\cdots,M_{p,r})$ is not a time-limited balanced realization, meaning that the time-limited Gramians of the realization are neither equal nor diagonal.
\end{remark}
\section{Frequency-limited Balanced Truncation (FLBT)}
In frequency-limited MOR, the objective is to ensure that $y-y_r$ remains small when the frequency of the input signal $u$ lies within the desired frequency interval $[0,\omega]$ rad/sec. However, the states retained by BT may not exhibit strong controllability and observability within this desired frequency range. As a result, their contribution to $E_c(x_o)\Big|_{\nu=0}^{\omega}$ and $E_o(x_o)\Big|_{\nu=0}^{\omega}$ might not be significant. Consequently, BT is deemed unsuitable for the problem at hand. Our focus now shifts to retaining the states that demonstrate strong controllability and observability within the desired frequency interval $[0,\omega]$ rad/sec.
\subsection{Frequency-limited Gramians}
The frequency-limited controllability Gramian $P_\Omega$ of the realization (\ref{eq:1}) within the desired frequency interval $[0,\omega]$ rad/sec, defined similarly to that in linear systems, is expressed as follows:
\begin{align}
P_\Omega=\frac{1}{2\pi}\int_{-\omega}^{\omega}(j\nu I-A)^{-1}BB^T(j\nu I -A)^{-*}d\nu,\label{eq:aa32}
\end{align}as outlined in \cite{gawronski1990model}. $P_\Omega$ can be computed by solving the following Lyapunov equation:
\begin{align}
AP_\Omega+P_\Omega A^T + F_\Omega BB^T +BB^TF_\Omega^*=0,\label{eq:n32}
\end{align}where
\begin{align}
F_\Omega&=\frac{1}{2\pi}\int_{-\omega}^{\omega}(j\nu I-A)^{-1}d\nu=\frac{j}{\pi}ln\big(-j\omega I-A\big)\nonumber
\end{align}and $[\cdot]^*$ denotes the conjugate transpose \cite{petersson2014model}.

We will initially examine the single-output realization (\ref{eq:1}), with the findings for the multiple-output realization (\ref{oe:m}) to be provided subsequently. The observability Gramian $Q$, as represented in the time domain in (\ref{eq:siog}), can also be equivalently expressed in the frequency domain as follows:
\begin{align}
Q&=\frac{1}{2\pi}\int_{-\infty}^{\infty}(j\nu_1 I-A)^{-*}M\nonumber\\
&\hspace*{1cm}\times\Big(\frac{1}{2\pi}\int_{-\infty}^{\infty}(j\nu_2 I-A)^{-1}BB^T(j\nu_2 I-A)^{-*}d\nu_2\Big)M(j\nu_1 I-A)^{-1}d\nu_1\nonumber\\
&=\frac{1}{2\pi}\int_{-\infty}^{\infty}(j\nu_1 I-A)^{-*}MP M(j\nu_1 I-A)^{-1}d\nu_1.\nonumber
\end{align}
Accordingly, $Q_\Omega$, the frequency-limited observability gramian within the desired frequency interval $[0,\omega]$ rad/sec, is defined as:
\begin{align}
Q_\Omega&=\frac{1}{2\pi}\int_{-\omega}^{\omega}(j\nu_1 I-A)^{-*}M\nonumber\\
&\hspace*{1cm}\times\Big(\frac{1}{2\pi}\int_{-\omega}^{\omega}(j\nu_2 I-A)^{-1}BB^T(j\nu_2 I-A)^{-*}d\nu_2\Big)M(j\nu_1 I-A)^{-1}d\nu_1\nonumber\\
&=\frac{1}{2\pi}\int_{-\omega}^{\omega}(j\nu_1 I-A)^{-*}MP_\Omega M(j\nu_1 I-A)^{-1}d\nu_1.\label{flogsi}
\end{align} To demonstrate that $Q_\Omega$ solves a Lyapunov equation, we first introduce $\tilde{Q}$ as:
\begin{align}
\tilde{Q}=\frac{1}{2\pi}\int_{-\infty}^{\infty}(j\nu_1 I-A)^{-*}MP_\Omega M(j\nu_1 I-A)^{-1}d\nu_1.
\end{align}
\begin{corollary}
$\tilde{Q}$ solves the following Lyapunov equation:
\begin{align}
A^T\tilde{Q}+\tilde{Q} A+MP_\Omega M=0.\label{eq:35}
\end{align}
\end{corollary}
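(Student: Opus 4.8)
The plan is to exploit the fact that $\tilde{Q}$ has exactly the same functional form as the frequency-domain representation of the infinite-horizon observability Gramian $Q$ derived just above, the only change being that the fixed middle matrix $MPM$ is replaced by $MP_\Omega M$. Since $P_\Omega$ is a constant symmetric matrix, I set $R:=MP_\Omega M$ and treat $\tilde{Q}=\frac{1}{2\pi}\int_{-\infty}^{\infty}(j\nu I-A)^{-*}R(j\nu I-A)^{-1}d\nu$ as the frequency-domain solution of the generic Lyapunov equation $A^{T}X+XA+R=0$. The derivation that produced (\ref{eq:og}) from (\ref{eq:siog}) never used any property of $MPM$ beyond its being a constant, so in principle it transfers verbatim; nevertheless I would give a short self-contained verification directly in the frequency domain.

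The key algebraic step is to rewrite the coefficient matrices using $(j\nu I-A)^{*}=-j\nu I-A^{T}$, which yields the two identities $A^{T}=-(j\nu I-A)^{*}-j\nu I$ and $A=-(j\nu I-A)+j\nu I$. Substituting these into $A^{T}\tilde{Q}+\tilde{Q}A$ and writing $G=(j\nu I-A)^{-1}$ for brevity, the integrand of $A^{T}\tilde{Q}+\tilde{Q}A$ becomes
\begin{align}
(-I-j\nu G^{*})RG+G^{*}R(-I+j\nu G)=-RG-G^{*}R,\nonumber
\end{align}
where the two $j\nu$-weighted terms cancel exactly. Hence $A^{T}\tilde{Q}+\tilde{Q}A=-\frac{1}{2\pi}\int_{-\infty}^{\infty}\big(RG+G^{*}R\big)\,d\nu$.

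It then remains to evaluate $\frac{1}{2\pi}\int_{-\infty}^{\infty}(j\nu I-A)^{-1}d\nu$. Since $A$ is Hurwitz, $(j\nu I-A)^{-1}$ is the Fourier transform of the causal matrix function $t\mapsto e^{At}$ (extended by zero for $t<0$), so by Fourier inversion the integral returns this function evaluated at $t=0$; because of the jump discontinuity there, the inversion yields the mean value $\tfrac{1}{2}I$, and taking conjugate transposes gives $\frac{1}{2\pi}\int_{-\infty}^{\infty}(j\nu I-A)^{-*}d\nu=\tfrac{1}{2}I$ as well. Substituting both into the previous expression produces $A^{T}\tilde{Q}+\tilde{Q}A=-\tfrac{1}{2}R-\tfrac{1}{2}R=-R=-MP_\Omega M$, which is precisely (\ref{eq:35}).

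The main obstacle I anticipate is the careful evaluation of the half-line identity $\frac{1}{2\pi}\int_{-\infty}^{\infty}(j\nu I-A)^{-1}d\nu=\tfrac{1}{2}I$: one must handle the boundary contribution at $t=0$ correctly rather than naively reading off $e^{A\cdot 0}=I$. An alternative that sidesteps this entirely is to invoke Parseval's theorem to rewrite $\tilde{Q}=\int_{0}^{\infty}e^{A^{T}t}R\,e^{At}\,dt$, after which $A^{T}\tilde{Q}+\tilde{Q}A=\int_{0}^{\infty}\frac{d}{dt}\big(e^{A^{T}t}R\,e^{At}\big)dt=-R$ follows at once from the fundamental theorem of calculus together with the Hurwitz decay of $e^{At}$; this is the route most consistent with the time-domain arguments used earlier in the paper, and is the one I would ultimately favor.
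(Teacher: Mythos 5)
Your proposal is correct, and the route you say you would ultimately favor is in fact the paper's own proof: the paper simply asserts the time-domain representation $\tilde{Q}=\int_{0}^{\infty}e^{A^Tt}MP_\Omega Me^{At}\,dt$ (citing Gawronski--Juang for the Parseval-type equivalence) and then invokes Corollary \ref{col1}, whose proof is exactly your fundamental-theorem-of-calculus identity $A^T\tilde{Q}+\tilde{Q}A=\int_{0}^{\infty}\tfrac{d}{dt}\big(e^{A^Tt}MP_\Omega Me^{At}\big)\,dt=-MP_\Omega M$, valid because $A$ is Hurwitz and $P_\Omega$ is a constant symmetric matrix. Your primary, purely frequency-domain verification is a genuinely different and equally valid argument, and the algebra checks out: with $G=(j\nu I-A)^{-1}$ and $R=MP_\Omega M$, the $j\nu$-weighted terms cancel to leave $-RG-G^{*}R$, and since $R$ is real symmetric this equals $-\big(RG+(RG)^{*}\big)=O(1/\nu^{2})$, so the resulting integral is absolutely convergent even though the resolvent itself decays only like $1/\nu$; it is only your subsequent splitting into the two separate integrals of $RG$ and $G^{*}R$ that requires the principal-value reading of $\frac{1}{2\pi}\int_{-\infty}^{\infty}(j\nu I-A)^{-1}d\nu=\tfrac{1}{2}I$, a subtlety you correctly flag (the jump of $t\mapsto e^{At}$ extended by zero at $t=0$ yields the mean value $\tfrac{1}{2}I$). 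It is also worth observing that your sandwich manipulation is precisely the one the paper deploys in the \emph{next} proof, establishing $Q_\Omega=F_\Omega^{*}\tilde{Q}+\tilde{Q}F_\Omega$, where the finite-interval analogue of your $\tfrac{1}{2}I$ is $F_\Omega$; so your route makes transparent why the infinite-interval Lyapunov equation \eqref{eq:35} carries no $F$-type cross factor, at the cost of the improper-integral bookkeeping that the paper's one-line time-domain argument avoids.
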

\begin{proof}
$\tilde{Q}$ can be equivalently represented in the time domain as:
\begin{align}
\tilde{Q}=\int_{0}^{\infty}e^{A^Tt}MP_\Omega Me^{At}dt;\nonumber
\end{align}cf. \cite{gawronski1990model}. Hence, it is evident from Corollary \ref{col1} that $\tilde{Q}$ satisfies Equation \eqref{eq:35}.
\end{proof}
\begin{theorem}
The following relationship holds between $Q_\Omega$ and $\tilde{Q}$:
\begin{align}
Q_{\Omega}=F_\Omega^*\tilde{Q}+\tilde{Q} F_\Omega. \label{eq:36}
\end{align}
\end{theorem}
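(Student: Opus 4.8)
The plan is to collapse the frequency-limited double integral defining $Q_\Omega$ down to the two single resolvent integrals that make up $F_\Omega$ and $F_\Omega^*$, by eliminating the constant middle factor $MP_\Omega M$ using the Lyapunov equation that $\tilde{Q}$ satisfies. First I would invoke the preceding corollary, equation \eqref{eq:35}, which gives $MP_\Omega M=-A^T\tilde{Q}-\tilde{Q}A$, and substitute this expression for $MP_\Omega M$ directly into the integrand of $Q_\Omega$ in \eqref{flogsi}. This splits the integrand into the two terms $-(j\nu I-A)^{-*}A^T\tilde{Q}(j\nu I-A)^{-1}$ and $-(j\nu I-A)^{-*}\tilde{Q}A(j\nu I-A)^{-1}$.

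Next I would simplify each term with the two algebraic identities that follow immediately from the definition of the resolvent. From $(j\nu I-A)(j\nu I-A)^{-1}=I$ we get $A(j\nu I-A)^{-1}=j\nu(j\nu I-A)^{-1}-I$, and taking the conjugate transpose of the same relation yields $(j\nu I-A)^{-*}A^T=-j\nu(j\nu I-A)^{-*}-I$. Applying the second identity to the first term and the first identity to the second term, the two contributions proportional to $j\nu$ enter with opposite signs and cancel, leaving the pointwise identity
\begin{align}
(j\nu I-A)^{-*}MP_\Omega M(j\nu I-A)^{-1}=(j\nu I-A)^{-*}\tilde{Q}+\tilde{Q}(j\nu I-A)^{-1}.\nonumber
\end{align}

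Finally I would integrate this identity over $\nu\in[-\omega,\omega]$ and multiply by $\tfrac{1}{2\pi}$. Since $\tilde{Q}$ is a constant matrix it factors out of both integrals, and the two remaining resolvent integrals are exactly $F_\Omega=\tfrac{1}{2\pi}\int_{-\omega}^{\omega}(j\nu I-A)^{-1}d\nu$ and its conjugate transpose $F_\Omega^*=\tfrac{1}{2\pi}\int_{-\omega}^{\omega}(j\nu I-A)^{-*}d\nu$, where the last identification uses that conjugate transpose commutes with integration over the real variable $\nu$. This produces $Q_\Omega=F_\Omega^*\tilde{Q}+\tilde{Q}F_\Omega$, which is \eqref{eq:36}.

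The only delicate point is keeping the conjugate-transpose bookkeeping consistent: I must verify that $(j\nu I-A)^{-*}A^T=-j\nu(j\nu I-A)^{-*}-I$ carries the $j\nu$ term with precisely the sign opposite to that in $A(j\nu I-A)^{-1}=j\nu(j\nu I-A)^{-1}-I$, since that sign mismatch is exactly what makes the imaginary contributions cancel, and I must confirm that $F_\Omega^*$ really is the integral of the conjugate-transposed resolvent. Once the signs are tracked correctly the remainder is purely mechanical and mirrors the classical Gawronski--Juang derivation for standard linear systems; no convergence difficulty arises because $A$ is Hurwitz and the finite interval keeps every integrand bounded.
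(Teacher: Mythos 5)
Your proposal is correct and is essentially the paper's own argument: both substitute $MP_\Omega M=-A^T\tilde{Q}-\tilde{Q}A$ from \eqref{eq:35}, reduce to the same pointwise identity $(j\nu I-A)^{-*}MP_\Omega M(j\nu I-A)^{-1}=\tilde{Q}(j\nu I-A)^{-1}+(j\nu I-A)^{-*}\tilde{Q}$, and integrate over $[-\omega,\omega]$ to recognize $F_\Omega$ and $F_\Omega^*$. The only cosmetic difference is that the paper obtains the identity by adding and subtracting $j\nu\tilde{Q}$ to factor the left side as $(j\nu I-A)^*\tilde{Q}+\tilde{Q}(j\nu I-A)$ before multiplying by the resolvents, whereas you cancel the $j\nu$ terms afterwards via the equivalent resolvent identities.
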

\begin{proof}
Note that Equation \eqref{eq:35} can be rewritten as:
\begin{align}
MP_\Omega M&=-A^T\tilde{Q}-\tilde{Q}A\nonumber\\
MP_\Omega M&=-j\nu_1\tilde{Q}-A^T\tilde{Q}+j\nu_1 \tilde{Q} -\tilde{Q} A\nonumber\\
MP_\Omega M&=(j\nu_1 I-A)^*\tilde{Q}+\tilde{Q}(j\nu_1 I-A)\nonumber\\
(j\nu_1 I-A)^{-*}MP_\Omega M(j\nu_1 I-A)^{-1}&=\tilde{Q}(j\nu_1 I-A)^{-1}+(j\nu_1 I-A)^{-*} \tilde{Q}\nonumber\\
\frac{1}{2\pi}(j\nu_1 I-A)^{-*}MP_\Omega M(j\nu_1 I-A)^{-1}&=\frac{\tilde{Q}}{2\pi}(j\nu_1 I-A)^{-1}+(j\nu_1 I-A)^{-*}\frac{\tilde{Q}}{2\pi}.\nonumber
\end{align}
Integrating both sides leads to:
\begin{align}
&\frac{1}{2\pi}\int_{-\omega}^{\omega}(j\nu_1 I-A)^{-*}MP_\Omega M(j\nu_1 I-A)^{-1}d\nu_1\nonumber&\nonumber\\
&=\frac{\tilde{Q}}{2\pi}\int_{-\omega}^{\omega}(j\nu_1 I-A)^{-1}d\nu_1+\frac{1}{2\pi}\int_{-\omega}^{\omega}(j\nu_1 I-A)^{-*}d\nu_1\times\tilde{Q}\nonumber\\
Q_\Omega&=\tilde{Q}F_\Omega+F_\Omega^*\tilde{Q}.\nonumber
\end{align}
\end{proof}
\begin{proposition}
$Q_\Omega$ can be computed by solving the following Lyapunov equation:
\begin{align}
A^TQ_{\Omega}+Q_{\Omega}A+F_\Omega^*MP_\Omega M+MP_\Omega MF_\Omega&=0.\label{dd36}
\end{align}
\end{proposition}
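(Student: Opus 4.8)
The plan is to derive the claimed Lyapunov equation directly from the representation $Q_\Omega = F_\Omega^*\tilde{Q}+\tilde{Q}F_\Omega$ established in the preceding theorem \eqref{eq:36}, together with the Lyapunov equation \eqref{eq:35} satisfied by $\tilde{Q}$. The key structural fact I would exploit is that $F_\Omega$ is a matrix function of $A$ --- indeed $F_\Omega=\frac{j}{\pi}\ln(-j\omega I-A)$ --- so it commutes with $A$, and taking conjugate transposes (recalling that $A$ is real, whence $A^*=A^T$) gives $A^TF_\Omega^*=F_\Omega^*A^T$. The commutation can be seen even more transparently from the integral form: each resolvent $(j\nu I-A)^{-1}$ commutes with $A$, so the integral defining $F_\Omega$ commutes with $A$ as well.

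First I would substitute the expression from \eqref{eq:36} into $A^TQ_\Omega+Q_\Omega A$, obtaining
\begin{align}
A^TQ_\Omega+Q_\Omega A=A^T\big(F_\Omega^*\tilde{Q}+\tilde{Q}F_\Omega\big)+\big(F_\Omega^*\tilde{Q}+\tilde{Q}F_\Omega\big)A.\nonumber
\end{align}
Using $A^TF_\Omega^*=F_\Omega^*A^T$ and $F_\Omega A=AF_\Omega$ to move $F_\Omega^*$ to the far left and $F_\Omega$ to the far right, the four terms regroup as
\begin{align}
A^TQ_\Omega+Q_\Omega A=F_\Omega^*\big(A^T\tilde{Q}+\tilde{Q}A\big)+\big(A^T\tilde{Q}+\tilde{Q}A\big)F_\Omega.\nonumber
\end{align}

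Then I would invoke \eqref{eq:35}, which gives $A^T\tilde{Q}+\tilde{Q}A=-MP_\Omega M$. Substituting this into the two bracketed factors yields $A^TQ_\Omega+Q_\Omega A=-F_\Omega^*MP_\Omega M-MP_\Omega MF_\Omega$, which rearranges to exactly \eqref{dd36}. The whole computation is short once the commutation is in hand, so the only step genuinely requiring care --- the main obstacle, such as it is --- is justifying $AF_\Omega=F_\Omega A$ together with its transpose counterpart $A^TF_\Omega^*=F_\Omega^*A^T$, since every subsequent manipulation is a mechanical regrouping that rests on these identities. This mirrors the derivation of the frequency-limited observability Gramian equation for standard LTI systems, with the term $MP_\Omega M$ here playing the role of the usual $C^TC$ input term.
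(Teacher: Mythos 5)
Your proposal is correct and follows essentially the same route as the paper: substitute the representation $Q_\Omega=F_\Omega^*\tilde{Q}+\tilde{Q}F_\Omega$ from \eqref{eq:36} into $A^TQ_\Omega+Q_\Omega A$, regroup via the commutation $AF_\Omega=F_\Omega A$ (and its conjugate-transpose counterpart), and invoke \eqref{eq:35}. Your explicit justification of the commutation from $F_\Omega$ being a matrix function of $A$, with $A^*=A^T$ for real $A$, is a slightly more careful spelling-out of a step the paper simply asserts.
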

\begin{proof}
By substituting Equation \eqref{eq:36} into $A^TQ_{\Omega}+Q_{\Omega}A$, we obtain:
\begin{align}
A^TQ_{\Omega}+Q_{\Omega}A&=A^T\big(F_\Omega^*\tilde{Q}+\tilde{Q}F_\Omega\big)+(F_\Omega^*\tilde{Q}+\tilde{Q}F_\Omega)A\nonumber
\end{align}
Since $F_\Omega A=AF_\Omega$, we have:
\begin{align}
A^TQ_{\Omega}+Q_{\Omega}A&=F_\Omega^*\big(A^T\tilde{Q}+\tilde{Q}A\big)+\big(A^T\tilde{Q}+\tilde{Q}A\big)F_\Omega\nonumber\\
&=-F_\Omega^*\big(MP_\Omega M)-\big(MP_\Omega M)F_\Omega\nonumber
\end{align}
Hence, $Q_\Omega$ satisfies \ref{dd36}.
\end{proof}
We can now extend the results to the multiple output realization \eqref{oe:m}. The frequency-limited observability Gramian for the multiple output realization within $[0,\omega]$ rad/sec is defined as follows:
\begin{align}
Q_\Omega&=\frac{1}{2\pi}\int_{-\omega}^{\omega}(j\nu I-A)^{-*}C^TC(j\nu I -A)^{-1}d\nu\nonumber\\
&+\frac{1}{2\pi}\int_{-\omega}^{\omega}(j\nu_1 I-A)^{-*}\Bigg(\sum_{i=1}^{p}M_i\Big(\frac{1}{2\pi}\int_{-\omega}^{\omega}(j\nu_2 I-A)^{-1}BB^T\nonumber\\
&\hspace*{2.5cm}\times(j\nu_2 I-A)^{-*}d\nu_2\Big)M_i\Bigg)(j\nu_1 I-A)^{-1}d\nu_1\nonumber\\
&=\frac{1}{2\pi}\int_{-\omega}^{\omega}(j\nu I-A)^{-*}C^TC(j\nu I -A)^{-1}d\nu\nonumber\\
&\hspace*{2.5cm}+\sum_{i=1}^{p}\Big(\frac{1}{2\pi}\int_{-\omega}^{\omega}(j\nu_1 I-A)^{-*}M_iP_\Omega M_i(j\nu_1 I-A)^{-1}d\nu_1\Big)\nonumber\\
&=Q_{0,\Omega}+\sum_{i=1}^{p}Q_{i,\Omega}.\label{eq:aa38}
\end{align}
It is evident that $Q_{0,\Omega}$ corresponds to the frequency-limited observability Gramian for the standard LTI case, as defined in \cite{gawronski1990model}, while $Q_{i,\Omega}$ is analogous to the frequency-limited observability Gramian defined in (\ref{flogsi}). Based on that, it can readily be noted that the following (generalized) Lyapunov equations hold:
\begin{align}
A^TQ_{0,\Omega}+Q_{0,\Omega}A+F_\Omega^* C^TC+C^TCF_\Omega&=0,\\
A^TQ_{i,\Omega}+Q_{i,\Omega}A+F_\Omega^*M_iP_\Omega M_i+M_iP_\Omega M_iF_\Omega&=0,\\
A^TQ_{\Omega}+Q_{\Omega}A+F_\Omega^*C^TC+C^TCF_\Omega\hspace*{2cm}&\nonumber\\
+\sum_{i=1}^{p}\big(F_\Omega^*M_iP_\Omega M_i+M_iP_\Omega M_iF_\Omega\big)&=0.\label{eq:41}
\end{align}
\begin{remark}
For a generic frequency interval $[-\omega_2,-\omega_1]\cup[\omega_1,\omega_2]$ rad/sec, the controllability and observability Gramians are computed as follows:
\begin{align}
P_\Omega&=P\big|_{\nu=\omega_1}^{\omega_2}+P\big|_{\nu=-\omega_2}^{-\omega_1}=Re\big(P\big|_{\nu=\omega_1}^{\omega_2}\big),\nonumber\\
Q_\Omega&=Q\big|_{\nu=\omega_1}^{\omega_2}+Q\big|_{\nu=-\omega_2}^{-\omega_1}=Re\big(Q\big|_{\nu=\omega_1}^{\omega_2}\big).\nonumber
\end{align} By selecting the negative frequencies, $P_\Omega$ and $Q_\Omega$ become real matrices. The Gramians in this case can be computed by solving the same equations as presented in this subsection, with $F_\Omega$ defined as follows:
\begin{align}
F_\Omega&=Re\Big(\frac{j}{\pi}ln\big((j\omega_1 I+A\big)^{-1}(j\omega_2 I+A)\big)\Big);\nonumber
\end{align}see \cite{petersson2013nonlinear} for more details.
\end{remark}
\subsection{Low-rank Approximation of the Frequency-limited Gramians}
In \cite{benner2016frequency}, it is demonstrated that the eigenvalues of frequency-limited Gramians decay significantly faster compared to standard Gramians, making them suitable for low-rank approximation. The Lyapunov equations (\ref{eq:n32}) and (\ref{eq:41}) share resemblance with those encountered in standard LTI systems. Efficient low-rank solutions for such frequency-limited Lyapunov equations are detailed in \cite{benner2016frequency}, and these methods can also be applied to compute (\ref{eq:n32}) and (\ref{eq:41}), as explained in the following.

By substituting
\begin{align}
\mathbb{A}&=A,\nonumber&\mathbb{K}&=\begin{bmatrix}B&F_\Omega B\end{bmatrix},&\mathbb{S}&=\begin{bmatrix}0&I\\I&0\end{bmatrix},\nonumber
\end{align}into (\ref{sneq:49}), we can obtain a low-rank solution for (\ref{eq:n32}) as $\mathcal{P}_\Omega=\mathbb{L}\mathbb{D}\mathbb{L}^T$ using the $LDL^T$ version of the ADI method \cite{lang2014ldlt}. Even though $\mathbb{D}$ may be indefinite, if $\mathcal{P}_\Omega$ adequately approximates $P_\Omega$ ($P_\Omega>0$), it is reasonable to assume $\mathcal{P}_\Omega\geq0$. A semidefinite factorization $\mathcal{P}_\Omega=\mathcal{Z}_\Omega\mathcal{Z}_\Omega^T$ can be derived following the techniques applied in the time-limited scenario. Subsequently, by setting
\begin{align}
\mathbb{A}&=A^T,\nonumber\\\
\mathbb{K}&=\begin{bmatrix}C^T&F_\Omega^TC^T&M_1\mathcal{Z}_\Omega&F_\Omega^TM_1\mathcal{Z}_\Omega&\cdots&M_p\mathcal{Z}_\Omega&F_\Omega^TM_p\mathcal{Z}_\Omega\end{bmatrix},\nonumber\\
\mathbb{S}&=\begin{bmatrix}0&I&0&0&0&0&0\\I&0&0&0&0&0&0\\0&0&0&I&0&0&0\\0&0&I&0&0&0&0\\0&0&0&0&\ddots&0&0\\0&0&0&0&0&0&I\\0&0&0&0&0&I&0\end{bmatrix},\nonumber
\end{align}a low-rank solution of (\ref{eq:41}) can be achieved as $\mathcal{Q}_\Omega=\mathbb{L}\mathbb{D}\mathbb{L}^T$. Moreover, a semidefinite factorization $\mathcal{Q}_\Omega=\mathcal{Y}_\Omega\mathcal{Y}_\Omega^T$ can be achieved following similar procedures to those applied for $\mathcal{P}_\Omega=\mathcal{Z}_\Omega\mathcal{Z}_\Omega^T$. In large-scale scenarios, the computational complexity associated with computing the matrix logarithm $F_\Omega$ can be significant. To mitigate this challenge, the Krylov subspace method introduced in \cite{benner2016frequency} can be utilized to approximate $F_\Omega B$, $CF_\Omega$, and $\mathcal{Z}_\Omega^TM_iF_\Omega$. Once these approximations of matrix logarithm products are achieved, low-rank solutions for (\ref{eq:n32}) and (\ref{eq:41}) can be obtained using either the ADI method or the Krylov subspace-based methods, as described in \cite{benner2016frequency}.

To circumvent the need for precomputed approximations of $F_\Omega B$, $CF_\Omega$, and $\mathcal{Z}_\Omega^TM_iF_\Omega$, one may explore methods focused on deriving low-rank approximations directly from integral expressions of the Gramians. In \cite{imran2015model}, low-rank approximations of frequency-limited Gramians are obtained using various quadrature rules. Given that the desired frequency interval is typically short, even a modest number of nodes in any quadrature rule can yield satisfactory accuracy. As long as the number of nodes remains small, this approach can be employed in large-scale settings to approximate (\ref{eq:aa32}) and (\ref{eq:aa38}) effectively. Alternatively, we propose to replace $(j\nu I-A)^{-1}B$ with its truncated Laguerre expansion, a method explored in detail in the subsequent discussion.

Let us consider the desired frequency interval as $[-\omega_2,-\omega_1]\cup[\omega_1,\omega_2]$ rad/sec. In this scenario, $P_\Omega$ can be expressed in integral form as follows:
\begin{align}
P_\Omega=Re\Big(\frac{1}{\pi}\int_{\omega_1}^{\omega_2}(j\nu I-A)BB^T(j\nu I-A)^{-*}d\nu\Big).\label{eq:aa42}
\end{align}
In the frequency domain, the Laguerre expansion of $(j\nu I-A)^{-1}B$ takes the form:
\begin{align}
(j\nu I-A)^{-1}B=\sum_{i=0}^{\infty} F_{i} \Phi_{i}^{\alpha}(j\nu),\nonumber
\end{align}where $F_i$ represents the Laguerre coefficients, and $\Phi_{i}^{\alpha}(j\nu)$ are Fourier transforms of $\phi_{i}^{\alpha}(t)$; see \cite{eid2009time} for more details. The scaled Laguerre functions $\Phi_{i}^{\alpha}(j\nu)$ are given by:
\begin{align}
\Phi_{i}^{\alpha}(j\nu) = \frac{\sqrt{2\alpha}}{j\nu + \alpha} \Big( \frac{j\nu - \alpha}{j\nu + \alpha} \Big)^{i}, \quad i = 0, 1, \cdots.\nonumber
\end{align} The Laguerre coefficients $F_i$ can be recursively computed as follows:
\begin{align}
F_0&=-\sqrt{2\alpha}(A-\alpha I)^{-1}B,\nonumber\\
F_i&=[(A-\alpha I)^{-1}(A+\alpha I)]F_{i-1},\quad i=1,2,\cdots.\nonumber
\end{align}
By substituting this Laguerre expansion into (\ref{eq:aa42}), we obtain:
\begin{align}
P_\Omega=Re\Big(\frac{1}{\pi}\int_{\omega_1}^{\omega_2}\sum_{i=0}^{\infty} F_{i} \Phi_{i}^{\alpha}(j\nu)\Big(\sum_{i=0}^{\infty} F_{i} \Phi_{i}^{\alpha}(j\nu)\Big)^*d\nu\Big).\nonumber
\end{align}
By truncating the Laguerre expansion at $N-1$, $P_\Omega$ can be approximated as:
\begin{align}
P_\Omega\approx Re\Big(\frac{1}{\pi}\int_{\omega_1}^{\omega_2}\Big(\sum_{i=0}^{N-1} F_{i} \Phi_{i}^{\alpha}(j\nu)\Big)\Big(\sum_{i=0}^{N-1} F_{i} \Phi_{i}^{\alpha}(j\nu)\Big)^*d\nu\Big).\nonumber
\end{align}
Now, let us define $\hat{F}_\Omega$, $\hat{\Phi}(j\nu)$, $\bar{D}_\Omega$, and $\hat{D}_\Omega$ as follows:
\begin{align}
\hat{F}_\Omega&=\begin{bmatrix}F_0& F_1&\cdots&F_{N-1}\end{bmatrix},\nonumber\\
\hat{\Phi}(j\nu)&=\begin{bmatrix}\Phi_{0}^{\alpha}(j\nu)&\Phi_{1}^{\alpha}(j\nu)&\cdots&\Phi_{N-1}^{\alpha}(j\nu)\end{bmatrix},\nonumber\\
\bar{D}_\Omega&=Re\Big(\frac{1}{\pi}\int_{\omega_1}^{\omega_2}\hat{\Phi}(j\nu)^T\hat{\Phi}(j\nu)d\nu\Big),\nonumber\\
\hat{D}_\Omega&=\bar{D}_\Omega\otimes I_m.\nonumber
\end{align}
Then, $P_\Omega$ can be expressed as:
\begin{align}
P_\Omega\approx \hat{F}_\Omega \hat{D}_\Omega\hat{F}_\Omega^T\big.\nonumber
\end{align}
In contrast to the infinite interval case (i.e., $[-\infty,\infty]$ rad/sec), $\Phi_{i}^{\alpha}(j\nu)$ is not orthogonal over a finite frequency interval, hence $\bar{D}_\Omega\neq I$.  Nevertheless, $\bar{D}_\Omega$ can be computed inexpensively when $N\ll n$, as the desired frequency interval is typically short, and a small number of nodes in any quadrature rule for numerical integration can offer good accuracy. Additionally, it is worth mentioning that $\bar{D}_\Omega$ remains unaffected by any parameters of the dynamical system. Once an analytical expression is obtained, it necessitates no further recalculations and can be utilized for all subsequent experiments. Thus, computing $\bar{D}_\Omega$ does not present any challenges or computational burden. As an illustration, we derive the analytical expression by specifying $N=2$ with MATLAB's symbolic toolbox, yielding the following result:
\begin{align}
\bar{D}_\Omega=Re\Big(\frac{2j}{\pi}\begin{bmatrix} tanh^{-1}\big(\frac{j\omega_1}{\alpha}\big)-tanh^{-1}\big(\frac{j\omega_2}{\alpha}\big)&-\frac{\alpha}{\alpha-j\omega_1}+\frac{\alpha}{\alpha-j\omega_2}\\-\frac{\alpha}{\alpha-j\omega_1}+\frac{\alpha}{\alpha-j\omega_2}&tanh^{-1}\big(\frac{j\omega_1}{\alpha}\big)-tanh^{-1}\big(\frac{j\omega_2}{\alpha}\big)\end{bmatrix}\Big),\nonumber
\end{align}where $tanh^{-1}(\cdot)$ denotes inverse hyperbolic tangent. This generic expression is applicable to any system and for any values of $\alpha$, $\omega_1$, and $\omega_2$. Therefore, it is advisable to derive an analytical expression using symbolic toolboxes like those available in MATLAB or Python. Once derived, this expression allows for straightforward parameter substitution, facilitating the on-the-fly computation of $\bar{D}_\Omega$. If $\bar{D}_\Omega$ proves to be positive-definite, it can be decomposed via its Cholesky factorization as $\bar{D}_\Omega=L_\omega L_\omega^T$. Then, the approximate low-rank factors of $P_\Omega$ can be obtained as follows:
\begin{align}
P_\Omega\approx \mathcal{P}_\Omega&=\hat{F}_\Omega(L_\omega L_{\omega}^T\otimes I_m)\hat{F}_\Omega^T\nonumber\\
&=\hat{F}_\Omega(L_\omega\otimes I_m)(L_\omega\otimes I_m)^T\hat{F}_\Omega^T\nonumber\\
&=\mathcal{Z}_\Omega \mathcal{Z}_\Omega^T.\nonumber
\end{align}
However, $\bar{D}_\Omega$ is not guaranteed to be positive-definite. If $\mathcal{P}_\Omega$ is a good approximation of $P_\Omega$, it is reasonable to assume that $\mathcal{P}_\Omega\geq0$ since $P_\Omega>0$, despite $\bar{D}_\Omega$ potentially being indefinite; see \cite{benner2016frequency}. In such cases, a semidefinite factorization $\mathcal{P}_\Omega=\mathcal{Z}_\Omega \mathcal{Z}_\Omega^T$ can be obtained along similar lines to the time-limited case. The low-rank approximation $\mathcal{Q}_\Omega=\mathcal{Y}_\Omega\mathcal{Y}_\Omega^T$ of $Q_\Omega$ can be obtained dually by replacing $(A,B)$ with $\big(A^T,\begin{bmatrix}C^T &M_1\mathcal{Z}\Omega &\cdots &M_p\mathcal{Z}\Omega\end{bmatrix}\big)$.

\begin{remark}
In contrast to the infinite frequency case, the truncated Laguerre expansion-based method presented in this subsection does not equate to the ADI method. Specifically, using the same shift $-\alpha$ in all iterations, the $LDL^T$-version of the ADI method does not reduce to the truncated Laguerre expansion-based method. Additionally, we will demonstrate in the numerical section that the truncated Laguerre expansion-based method remains effective even for arbitrary values of $\alpha$, despite the apparent restrictive shift choice compared to the flexible shifts in the ADI method.
\end{remark}
\subsection{Square Root Algorithm for FLBT}
The pseudo-code for the square root algorithm tailored for FLBT in LTI-QO systems is provided in Algorithm  \ref{alg2}. If the low-rank factors of $P_\Omega\approx\mathcal{P}_\Omega=\mathcal{Z}_\Omega\mathcal{Z}_\Omega^T$ and $Q_\Omega\approx\mathcal{Q}_\Omega=\mathcal{Y}_\Omega\mathcal{Y}_\Omega^T$ are computed, Steps (\ref{st1})-(\ref{st2}) can be accordingly replaced.
\begin{algorithm}
\caption{Square root Algorithm for FLBT}
\textbf{Input:} $(A,B,C,M_1,M_2,\cdots,M_p)$; $[\omega_1,\omega_2]$; $r$.\\
\textbf{Output:} $(A_r,B_r,C_r,M_{1,r},M_{2,r},\cdots,M_{p,r})$.
\begin{algorithmic}[1]\label{alg2}
\STATE Compute $F_\Omega=Re\Big(\frac{j}{\pi}ln\big((j\omega_1 I+A\big)^{-1}(j\omega_2 I+A)\big)\Big)$.\label{st1}
\STATE Compute $P_\Omega$ and $Q_\Omega$ by solving\\
$AP_\Omega+P_\Omega A^T=-F_\Omega BB^T-BB^TF_\Omega^T$,\\
$A^TQ_\Omega+Q_\Omega A=-F_\Omega^TC^TC-C^TCF_\Omega-\sum_{i=1}^{p}(F_\Omega^TM_iP_\Omega M_i+M_iP_\Omega M_iF_\Omega)$.
\STATE Compute Cholesky factorizations $P_\Omega=\mathcal{Z}_\Omega\mathcal{Z}_\Omega^T$ and $Q_\Omega=\mathcal{Y}_\Omega\mathcal{Y}_\Omega^T$.\label{st2}
\STATE Compute singular value decomposition of $\mathcal{Y}_\Omega^T\mathcal{Z}_\Omega=U\Sigma V^T$.
\STATE Partition $U=\begin{bmatrix}U_1&U_2\end{bmatrix}$ and $V=\begin{bmatrix}V_1&V_2\end{bmatrix}$ according to $\Sigma=diag(\Sigma_{r\times r},\Sigma_2)$.
\STATE Set $V_r=\mathcal{Z}_\Omega V_1\Sigma_{r\times r}^{-\frac{1}{2}}$ and $W_r=\mathcal{Y}_\Omega U_1\Sigma_{r\times r}^{-\frac{1}{2}}$.
\STATE $A_r=W_r^TAV_r$, $B_r=W_r^TB$, $C_r=CV_r$, $M_{i,r}=V_r^TM_iV_r$.
\end{algorithmic}
\end{algorithm}
\begin{remark}
Similar to the infinite interval BT \cite{benner2021gramians}, $(A_r,B_r,C_r,M_{1,r},\cdots,M_{i,r})$ is not a frequency-limited balanced realization, meaning that the frequency-limited Gramians of the realization are neither equal nor diagonal.
\end{remark}
\section{Numerical Results}
In this section, we compare TLBT and FLBT with BT using two examples sourced from benchmark systems widely used for testing MOR techniques \cite{chahlaoui2005benchmark,benner2023towards}. The experimental setup for the comparison is as follows. We arbitrarily select the desired time and frequency intervals, along with the orders of the reduced models, for demonstration purposes. The state equations, with zero initial conditions, are solved using MATLAB's \textit{``ode45"} solver. A sinusoidal signal is used as the input $u(t)$, and the midpoint of the desired frequency interval $[\omega_1,\omega_2]$ rad/sec is chosen as the frequency of the input signal. The relative error in the output response, $\frac{||y(t)-y_r(t)||}{||y(t)||}$, is compared to assess the accuracy of the algorithms. The relative error is presented on a logarithmic scale for clarity, utilizing MATLAB's \textit{``semilogy"} command. The matrix exponential and logarithm in TLBT and FLBT, respectively, are computed using MATLAB's built-in commands \textit{``expm"} and \textit{``logm"}. The Lyapunov equations in BT, TLBT, and FLBT are solved exactly using MATLAB's \textit{``lyap"} command. However, we also compute approximate solutions of the Lyapunov equation encountered in TLBT and FLBT using the ADI method and truncated Laguerre expansion-based method and compare their performance. The linear system of equations in both of these methods is solved using MATLAB's backslash operator ``$\backslash$". For the ADI method, the shifts are precomputed by determining the dominant poles associated with $(A,B,C)$, which are poles associated with large residuals known to yield good accuracy in the context of MOR; refer to \cite{benner2014self,saak2009efficient} for details. The convergence criterion used for the ADI method is that the $L_2$-norm of the relative residual should drop below the tolerance of $10^{-4}$. However, the ADI method did not converge in any of our experiments. In the Laguerre expansion-based method, the value of $\alpha$ is selected arbitrarily for demonstration purposes. We match the value of $N$ with the number of shifts used in the ADI method for a fair comparison. The analytical expressions of $\bar{D}_\tau$ and $\bar{D}_\Omega$ are precomputed using MATLAB's symbolic toolbox and saved as \textit{``.mat"} files. Thus, the computation of $\bar{D}_\tau$ and $\bar{D}_\Omega$ is done on-the-fly in all the experiments. The experiments are conducted using MATLAB R2021a on a computer with a 1.8GHz Intel i-7 processor and 16GB RAM running a Windows operating system.
\subsection{Clamped Beam} The clamped beam represents a $348^{th}$ order model of a cantilever beam, included in the benchmark collection of dynamical systems \cite{chahlaoui2005benchmark} for evaluating MOR algorithms. It is a standard state-space model characterized by matrices $(A,B,C)$. In this example, we introduce a quadratic term $x(t)^TMx(t)$ to the output of the clamped beam model, where $M\in\mathbb{R}^{348\times 348}$ is a diagonal matrix, and $x(t)^TMx(t)$ is a sum of $100$ randomly selected states. These states are chosen randomly by setting $100$ entries of $M$ to $1$ using MATLAB's command \textit{randperm(348,100)}. For the TLBT and FLBT, the desired time and frequency intervals are set to $[0,1]$ sec and $[1,2]$ rad/sec, respectively.

Before performing model reduction, we calculate $P$, $P_\tau$, $P_\Omega$, $Q$, $Q_\tau$, and $Q_\Omega$ using MATLAB's \textit{``lyap"} command and analyze the decay in their eigenvalues. These eigenvalues are normalized by dividing each by the largest eigenvalue and sorted in descending order. Figure \ref{fig1} illustrates the eigenvalue decay for $P$, $P_\tau$, and $P_\Omega$, while Figure \ref{fig2} depicts the eigenvalue decay for $Q$, $Q_\tau$, and $Q_\Omega$.
\begin{figure}[!h]
  \centering
  \includegraphics[width=12cm]{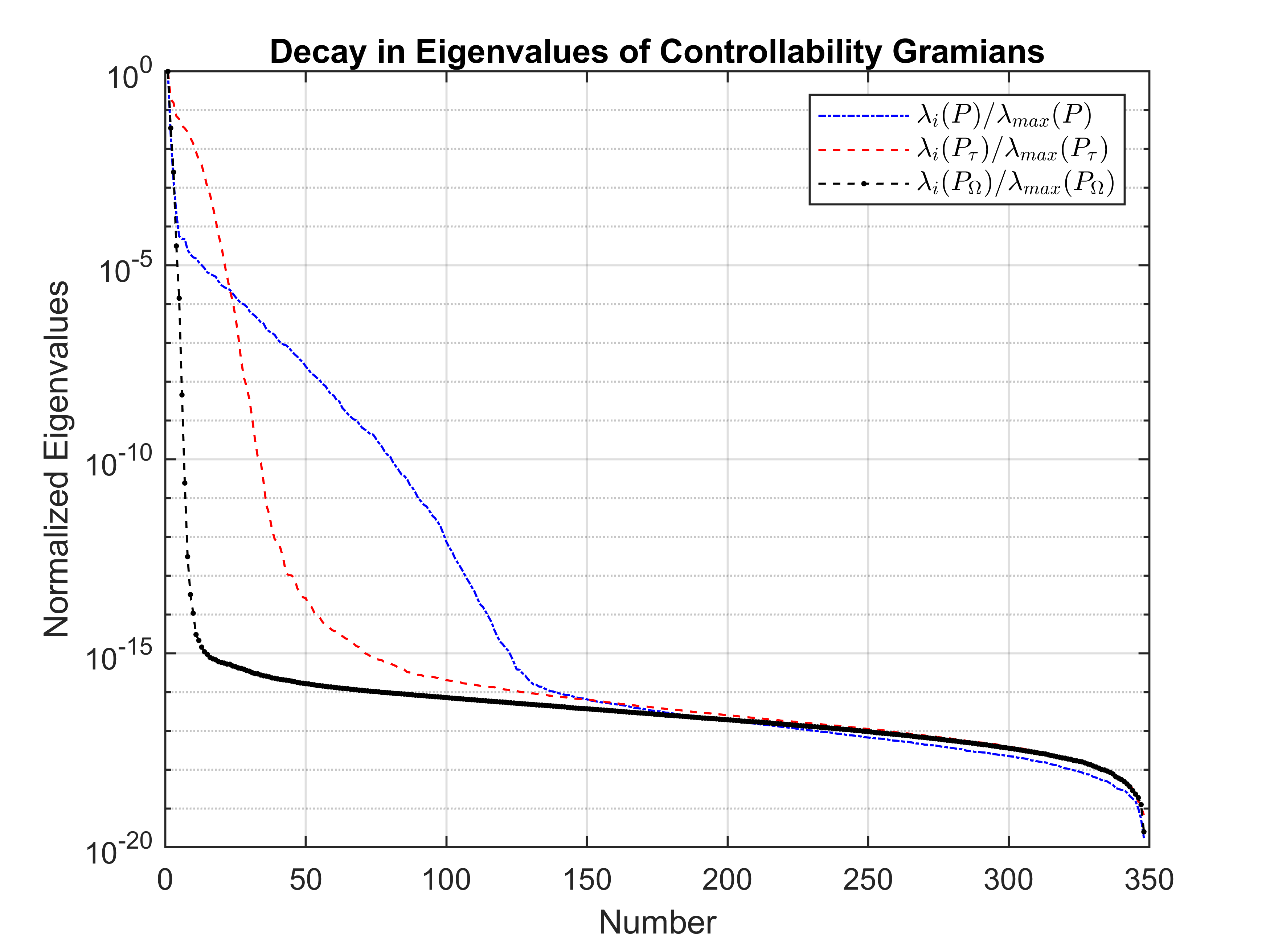}
  \caption{Decay in eigenvalues of $P$, $P_\tau$, and $P_\Omega$}\label{fig1}
\end{figure}
\begin{figure}[!h]
  \centering
  \includegraphics[width=12cm]{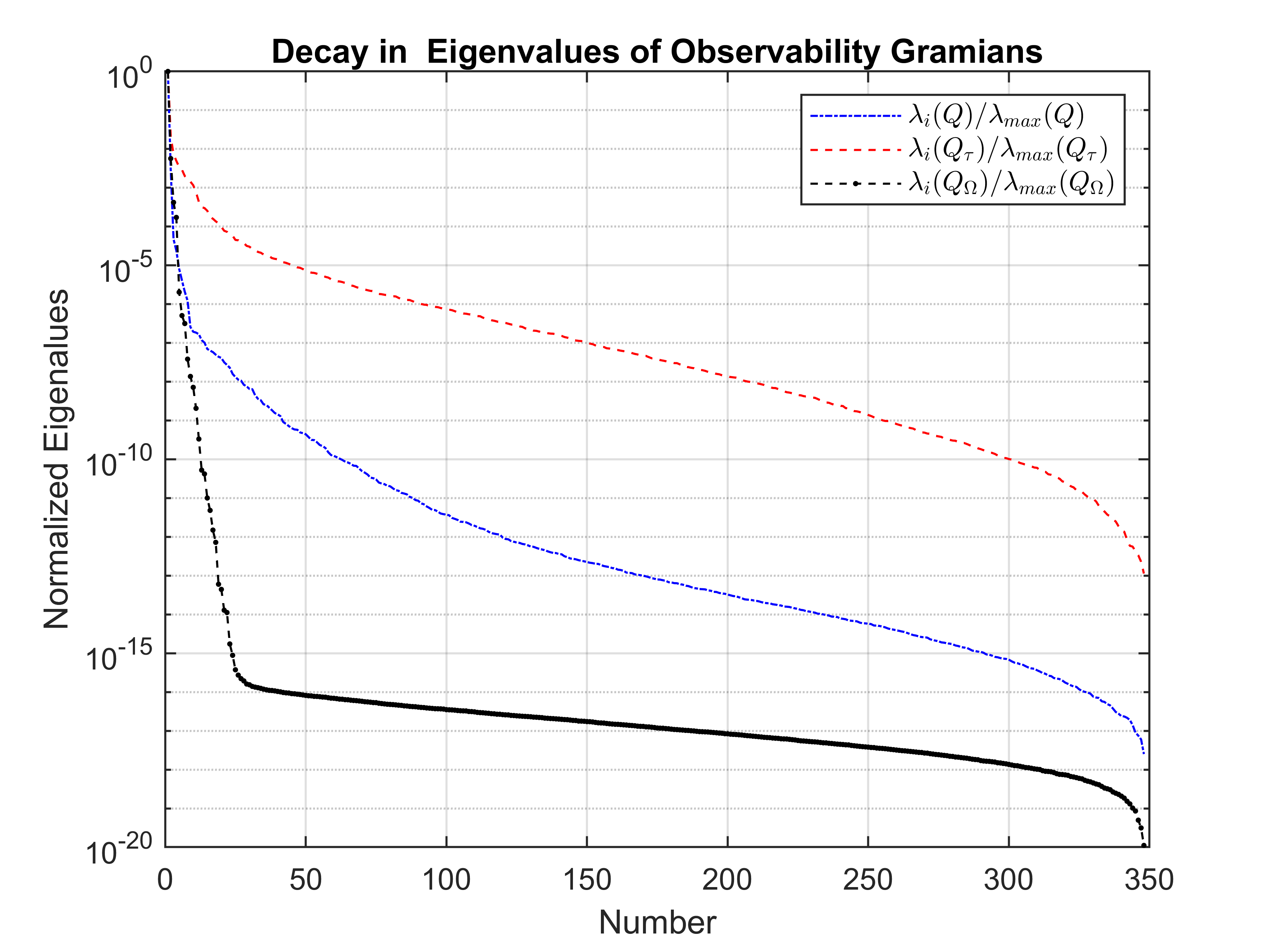}
  \caption{Decay in eigenvalues of $Q$, $Q_\tau$, and $Q_\Omega$}\label{fig2}
\end{figure} It is evident that the decay of all Gramians' eigenvalues is very rapid, indicating that these matrices can be effectively replaced with their low-rank approximations without significant loss in accuracy. Notably, the eigenvalues of $P_\tau$ and $P_\Omega$ decay faster than those of $P$, a trend consistent with findings in \cite{kurschner2018balanced} and \cite{benner2016frequency}. However, contrary to expectations based on the standard LTI case, Figure \ref{fig2} reveals that the singular values of $Q_\tau$ decay slower than those of $Q$ in the LTI-QO scenario in this particular example.

In this example, the maximum allowable number of ADI shifts is set to $20$, and the method is terminated if it fails to converge within this limit. The ADI method fails to converge within $20$ iterations when computing $P_\tau$ and $Q_\tau$ in this case. Consequently, the Laguerre expansion-based method with $\alpha=40$ is truncated at $N-1=19$ to approximate $P_\tau$ and $Q_\tau$. The $15^{th}$ order ROMs of the clamped beam model with a quadratic output are obtained using BT and TLBT. The Gramians $P_\tau$ and $Q_\tau$ in TLBT are computed using MATLAB's \textit{``lyap"} command, ADI method, and truncated Laguerre expansion-based method. The clamped beam model with a quadratic output is excited with the input signal $u(t)=0.1\cos(1.5t)$, and the relative error in the output $||y(t)-y_r(t)||/||y(t)||$ is plotted in Figure \ref{fig3}.
\begin{figure}[!h]
  \centering
  \includegraphics[width=12cm]{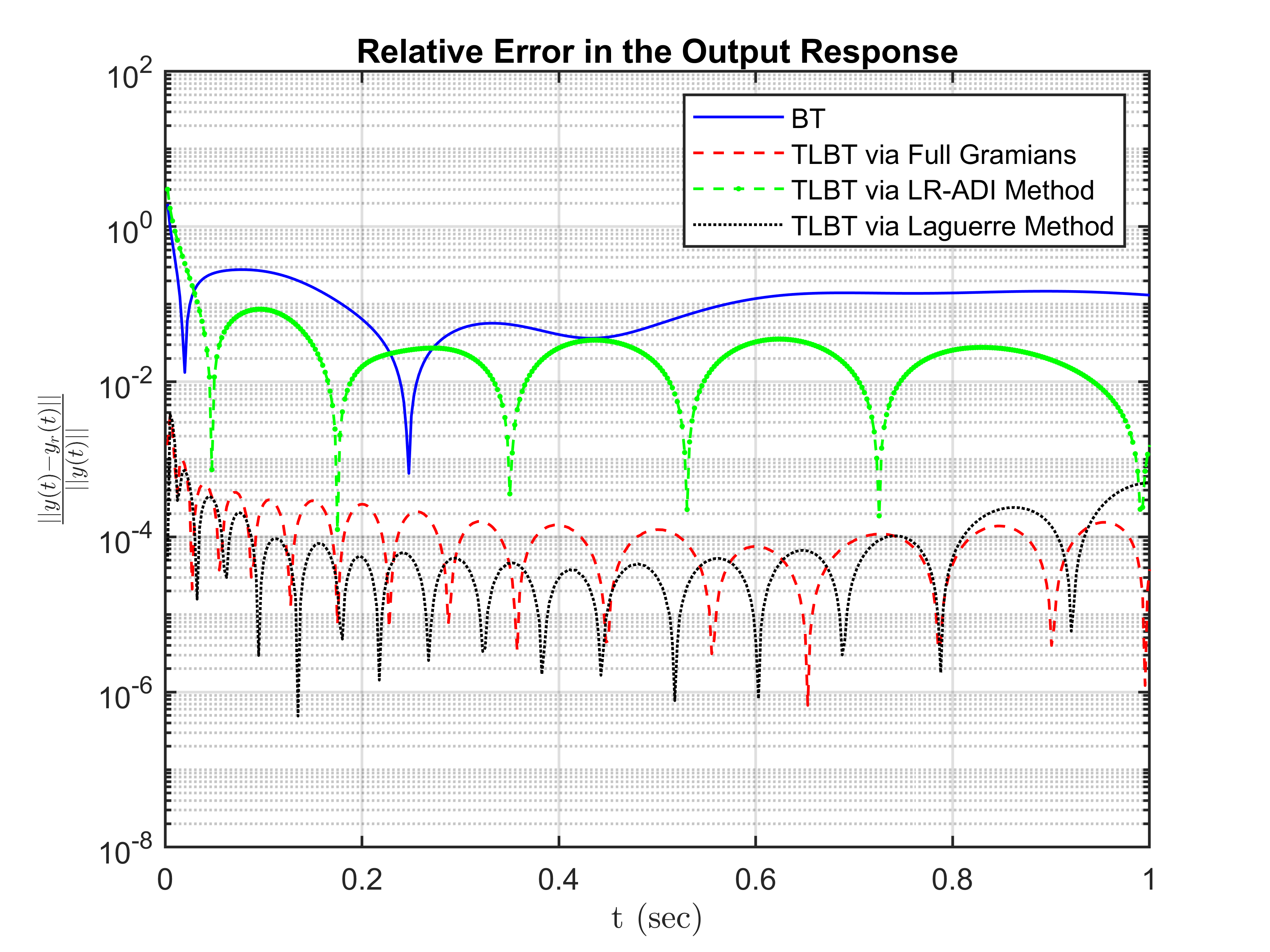}
  \caption{Relative Error in the Output Response within $[0,1]$ sec}\label{fig3}
\end{figure}
As anticipated, TLBT achieves superior accuracy within the designated time interval compared to BT. Additionally, it is noteworthy that in this example, the Laguerre expansion-based method demonstrates higher accuracy than the ADI method, despite the arbitrary selection of $\alpha$, whereas the ADI shifts are determined based on system theory heuristics. It is important to emphasize that the ADI method remains an effective approach, and our aim is not to downplay its efficacy. Instead, we aim to underscore the potential of the truncated Laguerre expansion-based method. While we have not explored it here, it is anticipated that experimenting with various other shift selection strategies outlined in \cite{benner2014self} might lead to improved accuracy for the ADI method. We have only considered one strategy mentioned in \cite{benner2014self}.

Similar to the time-limited scenario, the Gramians $P_\Omega$ and $Q_\Omega$ in FLBT are computed using MATLAB's \textit{``lyap"} command, ADI method, and the truncated Laguerre expansion-based method. The ADI method fails to converge within $20$ iterations while computing $P_\Omega$ and $Q_\Omega$ in this example. Consequently, the Laguerre expansion-based method, with $\alpha=8$, is truncated at $N-1=19$ to approximate $P_\Omega$ and $Q_\Omega$. The relative errors in the output $||y(t)-y_r(t)||/||y(t)||$ for $15^{th}$-order ROMs obtained using BT and FLBT are plotted in Figure \ref{fig4}. As anticipated, FLBT achieves superior accuracy compared to BT since the input signal's frequency, $u(t)=0.1\cos(1.5t)$, falls within the desired frequency interval. However, the accuracy provided by the ADI method in this example is underwhelming. Once again, it is noteworthy that the Laguerre expansion-based method demonstrates better accuracy than the ADI method in this example.
\begin{figure}[!h]
  \centering
  \includegraphics[width=12cm]{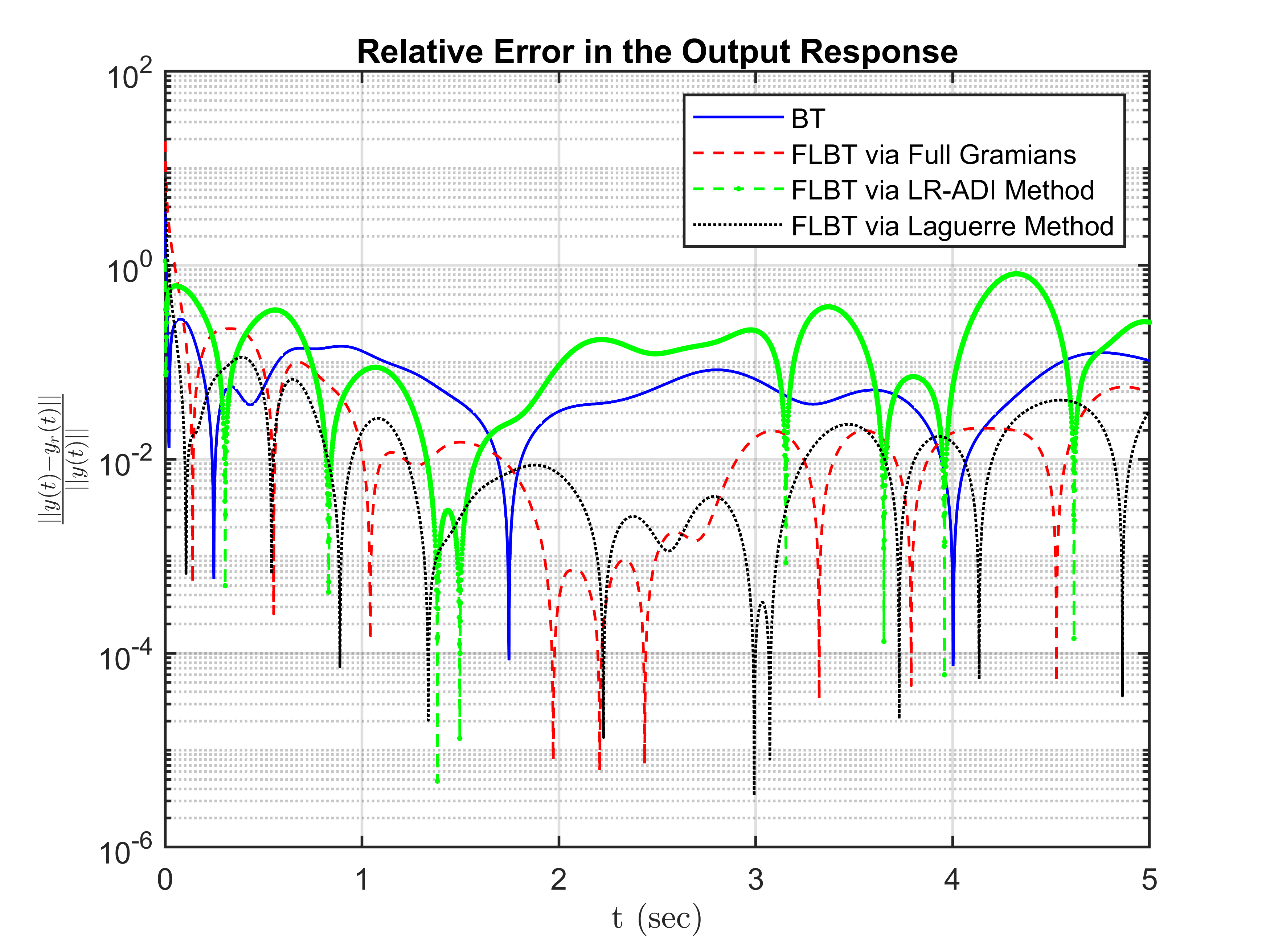}
  \caption{Relative Error in the Output Response}\label{fig4}
\end{figure}
\subsection{Flexible Space Structure}
The flexible space structure benchmark is a procedural modal model that simulates structural dynamics with customizable numbers of actuators and sensors \cite{benner2023towards}. This model serves as a representation of truss structures in space environments, such as the COFS-1 (Control of Flexible Structures) mass flight experiment. MATLAB code for generating various flexible space structures by specifying the number of actuators and sensors is available in the MORwiki database of benchmark examples \cite{benner2023towards}.

In this example, we generated a $5000^{th}$ order standard state-space model using the MATLAB code provided in MORwiki \cite{benner2023towards}, specifying $2500$ modes, $1$ input actuator, and $2$ output actuators. Subsequently, we introduced two quadratic terms $x(t)^TM_1x(t)$ and $x(t)^TM_2x(t)$ to the model's respective outputs, where $M_i\in\mathbb{R}^{5000\times 5000}$ are diagonal matrices, and $x(t)^TM_ix(t)$ represents the sum of $200$ randomly selected states. The selection of these states is performed by setting $200$ randomly chosen elements of $M_i$ to $1$ using MATLAB's command \textit{``randperm(5000,200)"}. For the TLBT and FLBT, the desired time and frequency intervals are set to $[0,2]$ sec and $[3,4]$ rad/sec, respectively.

In this example, the maximum allowable number of ADI shifts is capped at $50$, and if it fails to converge within this limit, the method is terminated. The ADI method fails to converge within $50$ iterations while computing $P_\tau$ and $Q_\tau$ in this experiment. Consequently, the Laguerre expansion-based method with $\alpha=35$ is truncated at $N-1=49$ to approximate $P_\tau$ and $Q_\tau$. The $40^{th}$ order ROMs of the flexible space structure model with quadratic outputs are obtained using both BT and TLBT. The Gramians $P_\tau$ and $Q_\tau$ in TLBT are computed utilizing MATLAB's \textit{``lyap"} command, ADI method, and truncated Laguerre expansion-based method. The model is excited with the input signal $u(t)=0.1\cos(3.5t)$, and the relative error in the first output $||y_1(t)-y_{1,r}(t)||/||y_1(t)||$ is plotted in Figure \ref{fig5}. The relative error in the second output is similar and hence not plotted for brevity.
\begin{figure}[!h]
  \centering
  \includegraphics[width=12cm]{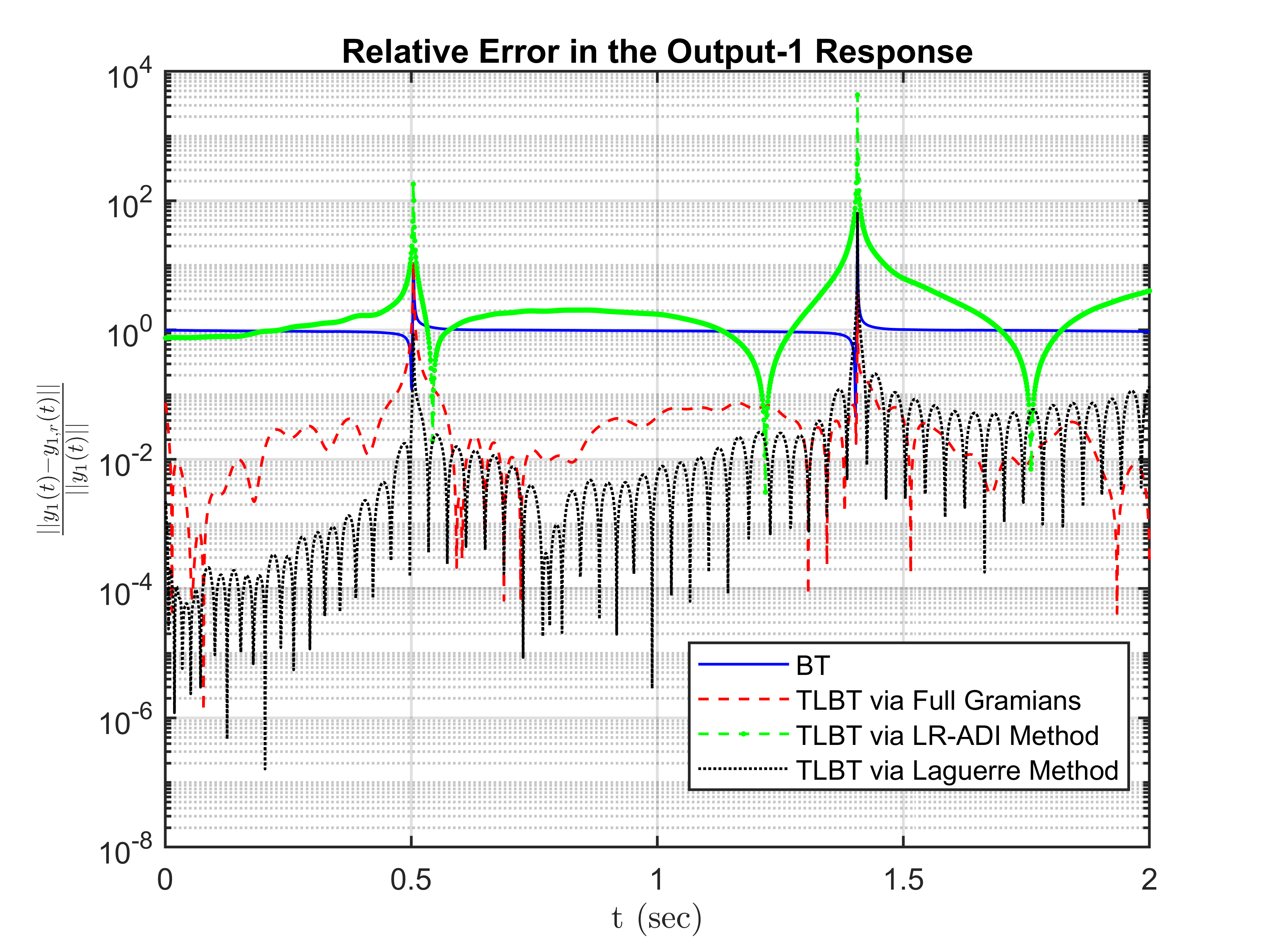}
  \caption{Relative Error in the Output-I Response within $[0,2]$ sec}\label{fig5}
\end{figure}
As expected, TLBT outperforms BT in terms of accuracy within the specified time frame. Furthermore, it's worth mentioning that in this scenario, the Laguerre expansion-based method exhibits greater accuracy than the ADI method, even though $\alpha$ was arbitrarily chosen. Once again, the accuracy achieved by the ADI method in this instance is underwhelming.

Similar to the time-limited case, the Gramians $P_\Omega$ and $Q_\Omega$ in FLBT are computed using MATLAB's \textit{``lyap"} command, ADI method, and the truncated Laguerre expansion-based method. However, the ADI method fails to converge within $50$ iterations while computing $P_\Omega$ and $Q_\Omega$ in this experiment. Subsequently, the Laguerre expansion-based method, employing $\alpha=17$, is truncated at $N-1=49$ to approximate $P_\Omega$ and $Q_\Omega$. The relative error in the first output $||y_1(t)-y_{1,r}(t)||/||y_1(t)||$ for $40^{th}$-order ROMs obtained using BT and FLBT are illustrated in Figure \ref{fig6}. The relative error in the second output is similar and hence not plotted for brevity. As expected, FLBT exhibits superior accuracy compared to BT since the input signal's frequency, $u(t)=0.1\cos(3.5t)$, lies within the desired frequency interval. Once again, the accuracy provided by the ADI method in this example is found to be unsatisfactory. Notably, the Laguerre expansion-based method showcases better accuracy than the ADI method in this scenario, underscoring its potential.
\begin{figure}[!h]
  \centering
  \includegraphics[width=12cm]{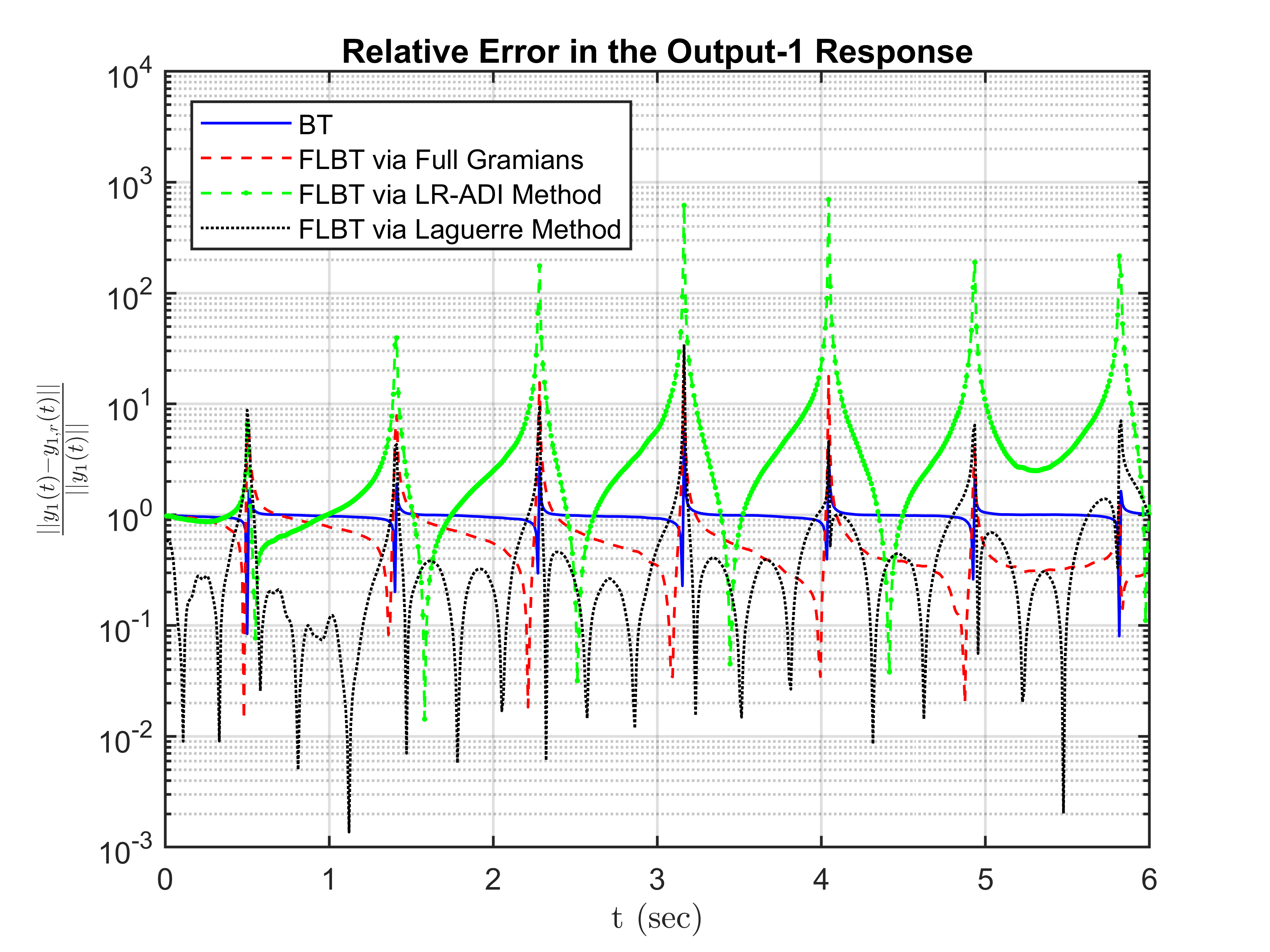}
  \caption{Relative Error in the Output-I Response}\label{fig6}
\end{figure}

BT, TLBT, and FLBT belong to the same family of algorithms, with the main difference lying in the computation of the Gramian approximations. The time required to compute Gramians in these techniques is tabulated in Table \ref{tab1}.
\begin{table}[!h]
\centering
\caption{Elapsed Time for Computing Gramians}\label{tab1}
\begin{tabular}{|c|c|c|c|c|}
\hline
Method & Time (sec)\\ \hline
Full Gramians in BT     & 15.0993  \\ \hline
Full Gramians in TLBT   & 33.8232\\ \hline
Low-rank Gramians in TLBT via ADI Method    & 0.7723\\ \hline
Low-rank Gramians in TLBT via Laguerre Method   & 0.4475\\ \hline
Full Gramians in FLBT  & 16.9524\\ \hline
Low-rank Gramians in FLBT via ADI Method     & 0.3500\\ \hline
Low-rank Gramians in FLBT via Laguerre Method   & 0.2349\\ \hline
\end{tabular}
\end{table}  It is evident from Table \ref{tab1} that the computation of low-rank Gramians is much more efficient compared to computing full Gramians, even for a modest order of $5000$. Furthermore, it should be noted that except for BT and the Laguerre-based approach, the other techniques require the computation of either the matrix exponential $e^{At}$ or the matrix logarithm $F_\Omega$ before computing the Gramians. In this experiment, the elapsed times for computing $e^{At}$ and $F_\Omega$ are $17.8105$ sec and $317.0889$ sec, respectively. As the order of the dynamical system increases, the computation of $e^{At}$ and $F_\Omega$ becomes more burdensome, necessitating their approximation before computing the Gramians. The Laguerre-based approach has an advantage over the ADI method as it does not require the computation of $e^{At}$ and $F_\Omega$.
\section{Conclusion}
This paper investigates MOR of LTI-QO systems, with a focus on time-limited and frequency-limited scenarios. The recently introduced BT algorithm \cite{benner2021gramians} is extended to address these specific cases. Definitions for time-limited and frequency-limited controllability and observability Gramians are provided, establishing them as solutions to particular Lyapunov equations. Moreover, low-rank solutions for these Lyapunov equations are explored. Specifically, the Laguerre expansion-based approach is adapted for computing low-rank factors of time-limited and frequency-limited Gramians for LTI-QO systems. The efficacy of the proposed TLBT and FLBT algorithms is demonstrated through the reduction of two benchmark dynamical systems. Additionally, the effectiveness of the Laguerre expansion-based approach in computing Gramians within a limited time and frequency interval is highlighted. The numerical results affirm that the proposed TLBT and FLBT algorithms ensure superior accuracy within the specified time and frequency intervals compared to the BT algorithm. Furthermore, the results confirm the efficacy of the Laguerre expansion-based approach in computing limited interval Gramians.
\section*{Acknowledgment}
This work is supported by the National Natural Science Foundation of China under Grants No. 62350410484 and 62273059, and in part by the High-end Foreign Expert Program No. G2023027005L granted by the State Administration of Foreign Experts Affairs (SAFEA).

\begin{thebibliography}{10}
\expandafter\ifx\csname url\endcsname\relax
  \def\url#1{\texttt{#1}}\fi
\expandafter\ifx\csname urlprefix\endcsname\relax\def\urlprefix{URL }\fi
\expandafter\ifx\csname href\endcsname\relax
  \def\href#1#2{#2} \def\path#1{#1}\fi

\bibitem{benner2021gramians}
P.~Benner, P.~Goyal, I.~P. Duff, Gramians, energy functionals, and balanced
  truncation for linear dynamical systems with quadratic outputs, IEEE
  Transactions on Automatic Control 67~(2) (2021) 886--893.

\bibitem{benner2011model}
P.~Benner, M.~Hinze, E.~J.~W. Ter~Maten, Model reduction for circuit
  simulation, Vol.~74, Springer, 2011.

\bibitem{schilders2008model}
W.~H. Schilders, H.~A. Van~der Vorst, J.~Rommes, Model order reduction: theory,
  research aspects and applications, Vol.~13, Springer, 2008.

\bibitem{antoulas2005approximation}
A.~C. Antoulas, Approximation of large-scale dynamical systems, SIAM, 2005.

\bibitem{obinata2012model}
G.~Obinata, B.~D. Anderson, Model reduction for control system design, Springer
  Science \& Business Media, 2012.

\bibitem{moore1981principal}
B.~Moore, Principal component analysis in linear systems: Controllability,
  observability, and model reduction, IEEE Transactions on Automatic Control
  26~(1) (1981) 17--32.

\bibitem{enns1984model}
D.~F. Enns, Model reduction with balanced realizations: An error bound and a
  frequency weighted generalization, in: The 23rd IEEE Conference on Decision
  and Control, IEEE, 1984, pp. 127--132.

\bibitem{mehrmann2005balanced}
V.~Mehrmann, T.~Stykel, Balanced truncation model reduction for large-scale
  systems in descriptor form, in: Dimension Reduction of Large-Scale Systems:
  Proceedings of a Workshop held in Oberwolfach, Germany, October 19--25, 2003,
  Springer, 2005, pp. 83--115.

\bibitem{heinkenschloss2008balanced}
M.~Heinkenschloss, D.~C. Sorensen, K.~Sun, Balanced truncation model reduction
  for a class of descriptor systems with application to the Oseen equations,
  SIAM Journal on Scientific Computing 30~(2) (2008) 1038--1063.

\bibitem{chahlaoui2006second}
Y.~Chahlaoui, D.~Lemonnier, A.~Vandendorpe, P.~Van~Dooren, Second-order
  balanced truncation, Linear Algebra and Its Applications 415~(2-3) (2006)
  373--384.

\bibitem{reis2008balanced}
T.~Reis, T.~Stykel, Balanced truncation model reduction of second-order
  systems, Mathematical and Computer Modelling of Dynamical Systems 14~(5)
  (2008) 391--406.

\bibitem{sandberg2004balanced}
H.~Sandberg, A.~Rantzer, Balanced truncation of linear time-varying systems,
  IEEE Transactions on Automatic Control 49~(2) (2004) 217--229.

\bibitem{lall2003error}
S.~Lall, C.~Beck, Error-bounds for balanced model-reduction of linear
  time-varying systems, IEEE Transactions on Automatic Control 48~(6) (2003)
  946--956.

\bibitem{benner2015survey}
P.~Benner, S.~Gugercin, K.~Willcox, A survey of projection-based model
  reduction methods for parametric dynamical systems, SIAM Review 57~(4) (2015)
  483--531.

\bibitem{son2021balanced}
N.~T. Son, P.-Y. Gousenbourger, E.~Massart, T.~Stykel, Balanced truncation for
  parametric linear systems using interpolation of Gramians: a comparison of
  algebraic and geometric approaches, Model Reduction of Complex Dynamical
  Systems (2021) 31--51.

\bibitem{lall2002subspace}
S.~Lall, J.~E. Marsden, S.~Glava{\v{s}}ki, A subspace approach to balanced
  truncation for model reduction of nonlinear control systems, International
  Journal of Robust and Nonlinear Control: IFAC-Affiliated Journal 12~(6)
  (2002) 519--535.

\bibitem{lall1999empirical}
S.~Lall, J.~E. Marsden, S.~Glava{\v{s}}ki, Empirical model reduction of
  controlled nonlinear systems, IFAC Proceedings Volumes 32~(2) (1999)
  2598--2603.

\bibitem{kramer2022balanced}
B.~Kramer, K.~Willcox, Balanced truncation model reduction for lifted nonlinear
  systems, in: Realization and Model Reduction of Dynamical Systems: A
  Festschrift in Honor of the 70th Birthday of Thanos Antoulas, Springer, 2022,
  pp. 157--174.

\bibitem{zhang2003gramians}
L.~Zhang, J.~Lam, B.~Huang, G.-H. Yang, On Gramians and balanced truncation of
  discrete-time bilinear systems, International Journal of Control 76~(4)
  (2003) 414--427.

\bibitem{duff2019balanced}
I.~P. Duff, P.~Goyal, P.~Benner, Balanced truncation for a special class of
  bilinear descriptor systems, IEEE Control Systems Letters 3~(3) (2019)
  535--540.

\bibitem{reis2010positive}
T.~Reis, T.~Stykel, Positive real and bounded real balancing for model
  reduction of descriptor systems, International Journal of Control 83~(1)
  (2010) 74--88.

\bibitem{opdenacker1988contraction}
P.~C. Opdenacker, E.~A. Jonckheere, A contraction mapping preserving balanced
  reduction scheme and its infinity norm error bounds, IEEE Transactions on
  Circuits and Systems 35~(2) (1988) 184--189.

\bibitem{phillips2002guaranteed}
J.~Phillips, L.~Daniel, L.~M. Silveira, Guaranteed passive balancing
  transformations for model order reduction, in: Proceedings of the 39th Annual
  Design Automation Conference, 2002, pp. 52--57.

\bibitem{sarkar2023structure}
A.~Sarkar, J.~M. Scherpen, Structure-preserving generalized balanced truncation
  for nonlinear port-Hamiltonian systems, Systems \& Control Letters 174 (2023)
  105501.

\bibitem{borja2021extended}
P.~Borja, J.~M. Scherpen, K.~Fujimoto, Extended balancing of continuous LTI
  systems: a structure-preserving approach, IEEE Transactions on Automatic
  Control 68~(1) (2021) 257--271.

\bibitem{gugercin2004survey}
S.~Gugercin, A.~C. Antoulas, A survey of model reduction by balanced truncation
  and some new results, International Journal of Control 77~(8) (2004)
  748--766.

\bibitem{kundur2007power}
P.~Kundur, Power system stability, Power system stability and control 10 (2007)
  7--1.

\bibitem{sauer2017power}
P.~W. Sauer, M.~A. Pai, J.~H. Chow, Power system dynamics and stability: with
  synchrophasor measurement and power system toolbox, John Wiley \& Sons, 2017.

\bibitem{grimble1979solution}
M.~Grimble, Solution of finite-time optimal control problems with mixed end
  constraints in the s-domain, IEEE Transactions on Automatic Control 24~(1)
  (1979) 100--108.

\bibitem{gawronski1990model}
W.~Gawronski, J.-N. Juang, Model reduction in limited time and frequency
  intervals, International Journal of Systems Science 21~(2) (1990) 349--376.

\bibitem{kurschner2018balanced}
P.~K{\"u}rschner, Balanced truncation model order reduction in limited time
  intervals for large systems, Advances in Computational Mathematics 44~(6)
  (2018) 1821--1844.

\bibitem{haider2017model}
K.~S. Haider, A.~Ghafoor, M.~Imran, F.~M. Malik, Model reduction of large scale
  descriptor systems using time limited Gramians, Asian Journal of Control
  19~(3) (2017) 1217--1227.

\bibitem{benner2021frequency}
P.~Benner, S.~W. Werner, Frequency-and time-limited balanced truncation for
  large-scale second-order systems, Linear Algebra and Its Applications 623
  (2021) 68--103.

\bibitem{shaker2014time}
H.~R. Shaker, M.~Tahavori, Time-interval model reduction of bilinear systems,
  International Journal of Control 87~(8) (2014) 1487--1495.

\bibitem{jazlan2019frequency}
A.~Jazlan, U.~Zulfiqar, V.~Sreeram, D.~Kumar, R.~Togneri, H.~F.~M. Zaki,
  Frequency interval model reduction of complex FIR digital filters, Numerical
  Algebra, Control \& Optimization 9~(3) (2019) 319--326.

\bibitem{wortelbore1994frequency}
P.~Wortelbore, Frequency weighted balanced reduction of closed-loop mechanical
  servo-systems: theory and tools, Ph. D. thesis, Delft University of
  Technology (1994).

\bibitem{zulfiqar2019finite}
U.~Zulfiqar, V.~Sreeram, X.~Du, Finite-frequency power system reduction,
  International Journal of Electrical Power \& Energy Systems 113 (2019)
  35--44.

\bibitem{benner2016frequency}
P.~Benner, P.~K{\"u}rschner, J.~Saak, Frequency-limited balanced truncation
  with low-rank approximations, SIAM Journal on Scientific Computing 38~(1)
  (2016) A471--A499.

\bibitem{imran2015model}
M.~Imran, A.~Ghafoor, Model reduction of descriptor systems using frequency
  limited Gramians, Journal of the Franklin Institute 352~(1) (2015) 33--51.

\bibitem{shaker2013frequency}
H.~R. Shaker, M.~Tahavori, Frequency-interval model reduction of bilinear
  systems, IEEE Transactions on Automatic Control 59~(7) (2013) 1948--1953.

\bibitem{depken1974observability}
C.~A. Depken, The observability of systems with linear dynamics and quadratic
  output., Ph.D. thesis, Georgia Institute of Technology (1974).

\bibitem{haasdonk2013reduced}
B.~Haasdonk, K.~Urban, B.~Wieland, Reduced basis methods for parameterized
  partial differential equations with stochastic influences using the
  Karhunen--Lo{\`e}ve expansion, SIAM/ASA Journal on Uncertainty Quantification
  1~(1) (2013) 79--105.

\bibitem{lutes2004random}
L.~D. Lutes, S.~Sarkani, Random vibrations: analysis of structural and
  mechanical systems, Butterworth-Heinemann, 2004.

\bibitem{hammerschmidt2015reduced}
M.~Hammerschmidt, S.~Herrmann, J.~Pomplun, L.~Zschiedrich, S.~Burger,
  F.~Schmidt, Reduced basis method for Maxwell's equations with resonance
  phenomena, in: Optical Systems Design 2015: Computational Optics, Vol. 9630,
  SPIE, 2015, pp. 138--151.

\bibitem{hess2016output}
M.~W. Hess, P.~Benner, Output error estimates in reduced basis methods for
  time-harmonic Maxwell's equations, in: Numerical Mathematics and Advanced
  Applications ENUMATH 2015, Springer, 2016, pp. 351--358.

\bibitem{van2010model}
R.~Van~Beeumen, K.~Meerbergen, Model reduction by balanced truncation of linear
  systems with a quadratic output, in: AIP Conference Proceedings, Vol. 1281,
  American Institute of Physics, 2010, pp. 2033--2036.

\bibitem{pulch2019balanced}
R.~Pulch, A.~Narayan, Balanced truncation for model order reduction of linear
  dynamical systems with quadratic outputs, SIAM Journal on Scientific
  Computing 41~(4) (2019) A2270--A2295.

\bibitem{ahmad2010krylov}
M.~I. Ahmad, I.~Jaimoukha, M.~Frangos, Krylov subspace restart scheme for
  solving large-scale Sylvester equations, in: Proceedings of the 2010 American
  Control Conference, IEEE, 2010, pp. 5726--5731.

\bibitem{benner2014self}
P.~Benner, P.~K{\"u}rschner, J.~Saak, Self-generating and efficient shift
  parameters in ADI methods for large Lyapunov and Sylvester equations,
  Electronic Transactions on Numerical Analysis 43 (2014) 142--162.

\bibitem{benner2013numerical}
P.~Benner, J.~Saak, Numerical solution of large and sparse continuous time
  algebraic matrix Riccati and Lyapunov equations: a state of the art survey,
  GAMM-Mitteilungen 36~(1) (2013) 32--52.

\bibitem{lang2014ldlt}
N.~Lang, H.~Mena, J.~Saak, An $LDL^T$ factorization based ADI algorithm for
  solving large-scale differential matrix equations, PAMM 14~(1) (2014)
  827--828.

\bibitem{xiao2022model}
Z.-H. Xiao, Q.-Y. Song, Y.-L. Jiang, Z.-Z. Qi, Model order reduction of linear
  and bilinear systems via low-rank Gramian approximation, Applied Mathematical
  Modelling 106 (2022) 100--113.

\bibitem{li2002low}
J.-R. Li, J.~White, Low rank solution of Lyapunov equations, SIAM Journal on
  Matrix Analysis and Applications 24~(1) (2002) 260--280.

\bibitem{eid2009time}
R.~Eid, Time domain model reduction by moment matching, Ph.D. thesis,
  Technische Universit{\"a}t M{\"u}nchen (2009).

\bibitem{tombs1987truncated}
M.~S. Tombs, I.~Postlethwaite, Truncated balanced realization of a stable
  non-minimal state-space system, International Journal of Control 46~(4)
  (1987) 1319--1330.

\bibitem{petersson2014model}
D.~Petersson, J.~L{\"o}fberg, Model reduction using a frequency-limited
  $\mathcal{H}_2$-cost, Systems \& Control Letters 67 (2014) 32--39.

\bibitem{petersson2013nonlinear}
D.~Petersson, A nonlinear optimization approach to $\mathcal{H}_2$-optimal modeling and
  control, Ph.D. thesis, Link{\"o}ping University Electronic Press (2013).

\bibitem{chahlaoui2005benchmark}
Y.~Chahlaoui, P.~Van~Dooren, Benchmark examples for model reduction of linear
  time-invariant dynamical systems, in: Dimension Reduction of Large-Scale
  Systems: Proceedings of a Workshop held in Oberwolfach, Germany, October
  19--25, 2003, Springer, 2005, pp. 379--392.

\bibitem{benner2023towards}
P.~Benner, K.~Lund, J.~Saak, Towards a benchmark framework for model order
  reduction in the mathematical research data initiative (MARDI), PAMM 23~(3)
  (2023) e202300147.

\bibitem{saak2009efficient}
J.~Saak, Efficient numerical solution of large scale algebraic matrix equations
  in PDE control and model order reduction, Ph.D. thesis, Chemnitz University
  of Technology (2009).

\end{thebibliography}

\end{document}